%% file: main.tex
\documentclass[sigplan,nonacm]{acmart}
\usepackage{amsmath,amsfonts}
\usepackage[ruled,vlined]{algorithm2e}
\usepackage{adjustbox}
\usepackage{graphicx}
\usepackage{textcomp}
\usepackage{xcolor}
\usepackage{physics}
\usepackage{wrapfig}
\usepackage{subcaption}
\usepackage{diagbox}
\usepackage{multirow}
\usepackage{multicol}
\usepackage{makecell}
\usepackage{enumitem}
\usepackage{hyperref}
\usepackage{bm}
\hypersetup{}

\newtheorem{theorem}{Theorem}[section]

\newtheorem{lemma}{Lemma}[section]
\newtheorem{remark}{Remark}[section]
\newtheorem{definition}{Definition}[section]

\newcommand{\myCompilerName}{GeneCS} 
\newcommand{\myCompilerNameSpace}{GeneCS }

\begin{document}
\title{\myCompilerName: Synthesizing Resource-Efficient Code Surgery for Arbitrary Quantum Stabilizer Codes}

\author{Junyu Zhou}
\affiliation{%
  \institution{University of Pennsylvania}
  \city{Philadelphia}
  \country{USA}
}
\email{junyuzh@seas.upenn.edu}

\author{Ali Javadi-Abhari}
\affiliation{%
  \institution{IBM Quantum}
  \city{New York}
  \country{USA}
}
\email{Ali.Javadi@ibm.com}

\author{Gushu Li}
\affiliation{%
  \institution{University of Pennsylvania}
  \city{Philadelphia}
  \country{USA}
}
\email{gushuli@seas.upenn.edu}

\begin{abstract}
Efficiently realizing logical operations on general stabilizer codes remains a long-standing challenge in fault tolerant quantum computing. While code surgery provides a general framework with provable guarantees by joint logical measurements, existing constructions are largely theoretical and incur substantial ancilla overhead in practice. 
In this work, we propose \myCompilerName, a resource-efficient compiler for synthesizing code surgery protocols for arbitrary stabilizer codes. Our approach leverages structure-aware optimizations to eliminate redundancy in graph construction, dynamically balance expansion and congestion, and incorporate code degree constraints. 
Experimental results show that \myCompilerNameSpace achieves an average reduction of over 85\% in ancillary qubits and checks for both single-code and cross-code logical operations, while preserving logical error rates. Moreover, our compiler scales to codes with more than $10^4$ qubits with an amortized compilation time of about one second per instance. These results enable practical logical operations and efficient cross-code communication, thereby supporting the deployment of modern QLDPC codes and heterogeneous quantum architectures.
\end{abstract}

\maketitle 

\input{tex/1-Introduction}
\input{tex/2-Preliminary}

\input{tex/3-ProblemFormulation}
\input{tex/4-OptimizingExpanderConstruction}
\input{tex/5-BalancingExpansionandCongestion}

\input{tex/6-IncorporatingDegreeConstraints}

\input{tex/7-Evaluation}

\input{tex/8-RelatedWork}

\section*{Acknowledgements}
We thank Zhiyang He (Sunny), Alexander Cowtan,  Dominic J. Williamson, and Theodore J. Yoder for their insightful discussions and valuable feedback.
This work was supported in part by the U.S. Department of Energy, Office of Science, Office of Advanced Scientific Computing Research under Contract No. DE-AC05-00OR22725 through the Accelerated Research in Quantum Computing Program MACH-Q project, the U.S. National Science Foundation CAREER Award No. CCF-2338773, and 
the Defense Advanced Research Projects Agency
Microsystems Technology Office (MTO) HARQ Program under Agreement No. HR0011-26-9-E113. GL was also supported by a Cisco Research Award.
This research was, in part, funded by the U.S. Government.
The views
and conclusions contained in this document are those of the authors and
should not be interpreted as representing the official policies, either
expressed or implied, of the U.S. Government.

\bibliographystyle{ACM-Reference-Format}
\balance
\bibliography{refs}

\newpage
\appendix

\input{tex/Appendix}

\end{document}

%% file: tex/1-Introduction.tex
\section{Introduction}\label{Sec:Introduction}

Quantum error correction (QEC)~\cite{gottesman1997stabilizer} is a fundamental requirement for scalable fault-tolerant~\cite{gottesman1998theory} quantum computing. Most practical QEC schemes are based on the stabilizer formalism, which provides a unified framework for describing a wide range of codes. Over the past decade, a diverse landscape of stabilizer codes has emerged, including topological codes~\cite{bombin2013introduction} such as the surface code~\cite{fowler2012surface} and color codes~\cite{landahl2011fault}, as well as more recent constructions such as Quantum Low-Density Parity-Check (QLDPC) codes~\cite{breuckmann2021quantum}, e.g., Bivariate Bicycle (BB) codes~\cite{bravyi2024high, yoder2025tour}. These code families offer complementary advantages: surface codes enable efficient fault-tolerant logical operations with local interactions, while QLDPC codes provide significantly improved encoding rates and are attractive for quantum memory.

A key implication of this diversity is that practical fault-tolerant quantum computing is unlikely to rely on a single code family. Due to fundamental constraints on fault-tolerant universality~\cite{eastin2009restrictions,chakraborty2026nogo}, efficient implementations of universal fault-tolerant quantum computing typically require combining multiple codes or switching between encoding schemes. Emerging system-level studies further support this view, showing that leveraging different codes or platforms can significantly improve resource efficiency~\cite{stein2025hetec, fang2026bridging, mundada2026heterogeneous}. 
These trends highlight a fundamental requirement: \emph{efficient cross-code communication}. Logical quantum information must be transferred across different encodings, and joint logical operations must be performed between them. Consequently, enabling \emph{communication and logical operations across arbitrary stabilizer codes} becomes a central challenge for scalable fault-tolerant quantum computing.

This paper focuses on this challenge. In particular, we study how to realize fault-tolerant logical operations that enable cross-code communication, and more specifically, how to efficiently synthesize \emph{code surgery} protocols for arbitrary stabilizer codes. Code surgery~\cite{cohen2022low, cowtan2024ssip, cross2024improved}, a generalization of lattice surgery~\cite{horsman2012surface}, provides a natural mechanism for implementing joint logical measurements and transferring logical information across arbitrary stabilizer codes.
While lattice surgery on the surface code is well understood and highly optimized~\cite{horsman2012surface, fowler2018low, zhou2026topols, litinski2019game}, efficiently extending these techniques to general stabilizer codes remains an open challenge. 
Existing approaches are either general but inefficient—relying on conservative constructions that incur substantial ancilla overhead, with the first theoretical guarantees of QLDPC structure and distance preservation given in~\cite{williamson2024low}—or specialized ad-hoc designs~\cite{williamson2024low} that are limited to specific instances, as well as constructions that rely on expensive computations such as code distance checking~\cite{ide2025fault, zheng2025high, cowtan2024ssip}, and thus do not scale.



This gap highlights a missing abstraction: \emph{resource-efficient code surgery should be treated as a synthesis and optimization problem}. In particular, we observe that constructing code surgery protocols can be naturally formulated as a \emph{graph synthesis problem}, where the ancilla system corresponds to a graph whose structure determines both correctness (e.g., distance preservation) and resource cost. Existing constructions~\cite{williamson2024low, he2025extractors} follow rigid pipelines (expansion, decongestion, thickening, and cellulation) designed for analysis, but introduce significant redundancy when applied in practice. This motivates the need for a compiler that directly optimizes this graph structure.

\begin{figure}[t]
    \centering
    \includegraphics[width=1.0\linewidth]{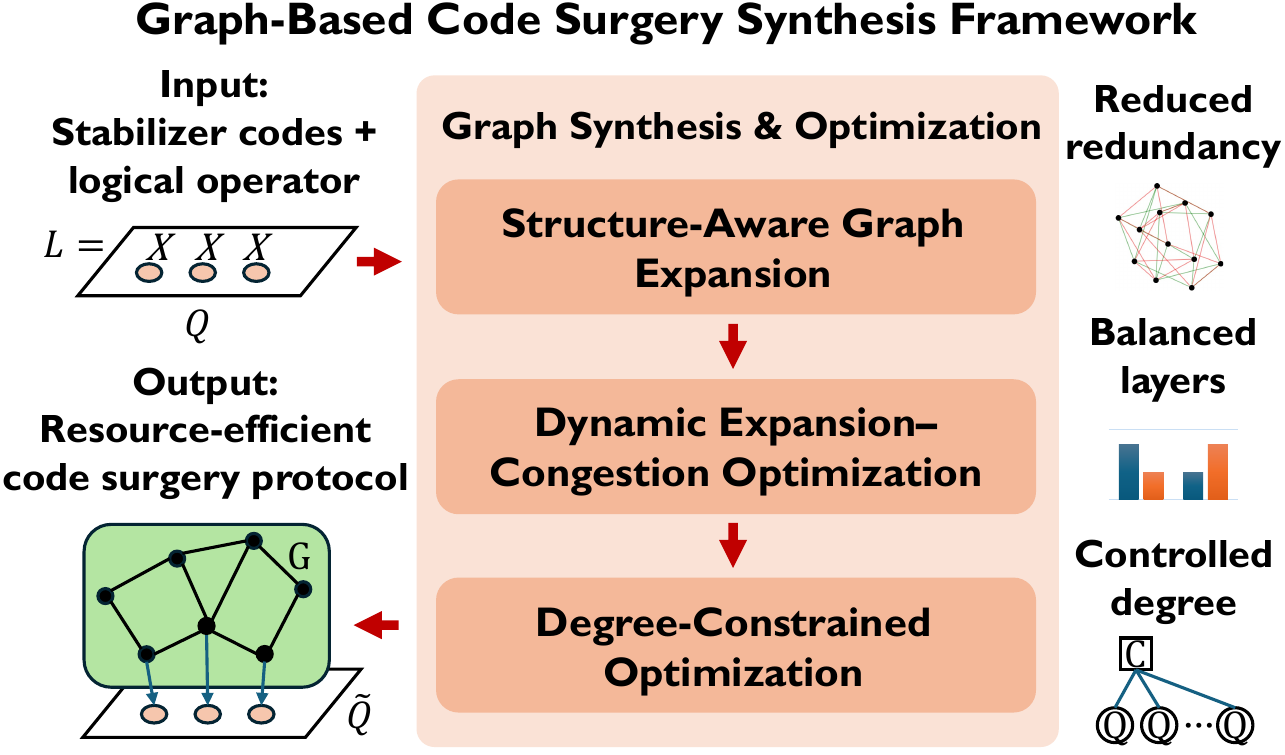}
    \caption{Overview of \myCompilerName~workflow}
    \label{fig:overview}
\end{figure}

To this end, we propose \myCompilerName, a compiler framework for synthesizing resource-efficient code surgery protocols for arbitrary stabilizer codes. As shown in Figure~\ref{fig:overview}, \myCompilerNameSpace treats code surgery construction as a graph optimization problem and systematically reduces redundancy across the entire pipeline.
Our design is based on three key ideas. \textbf{First}, we propose a new graph expansion algorithm to leverage the connectivity already present in the initial path-matching structure, avoiding redundant edge additions introduced by independent expander constructions. \textbf{Second}, we introduce a dynamic cycle-tracking framework that monitors congestion during graph construction and identifies the optimal expansion--congestion trade-off point online, eliminating unnecessary thickening. \textbf{Third}, we incorporate degree constraints into the construction by enforcing qubit and check degree limits through load-aware cycle partitioning and generalized cellulation, yielding more efficient implementations.

Our experimental results show that \myCompilerNameSpace significantly reduces the resource overhead of code surgery constructions. For single-code logical measurements, our method achieves an average reduction of $86.7\%$ in ancillary qubits and $85.8\%$ in checks compared to the best baseline. For cross-code logical measurements, we observe consistent reductions of around $85\%$. These improvements are achieved without degrading logical error rates, while maintaining practical efficiency. Even for large codes with over $10^4$ qubits, the amortized compilation time is approximately one second per instance.

Our major contributions can be summarized as follows:
\begin{enumerate}

\item \textbf{General and efficient code surgery synthesis.}
We propose \myCompilerName, the first resource-efficient compiler for synthesizing code surgery protocols that realize logical operations across arbitrary stabilizer codes.

\item \textbf{Graph-based optimization framework.}
We formulate code surgery synthesis as a graph optimization problem and introduce techniques that eliminate redundancy, balance expansion and congestion, and incorporate qubit degree constraints.

\item \textbf{Strong performance and scalability.}
\myCompilerNameSpace achie-ves over $85\%$ reduction in ancillary resources while preserving code distance and logical error rates, enabling efficient cross-code communication and scalable fault-tolerant quantum computing.

\end{enumerate}

%% file: tex/2-Preliminary.tex
\section{Preliminary} \label{Sec:Preliminary}

We provide a brief overview of the stabilizer formalism for quantum error-correcting codes and code surgery technique. For a general introduction to quantum computing, we refer the reader to~\cite{nielsen2010quantum}.

\begin{figure}[t]
    \centering
    \includegraphics[width=0.9\linewidth]{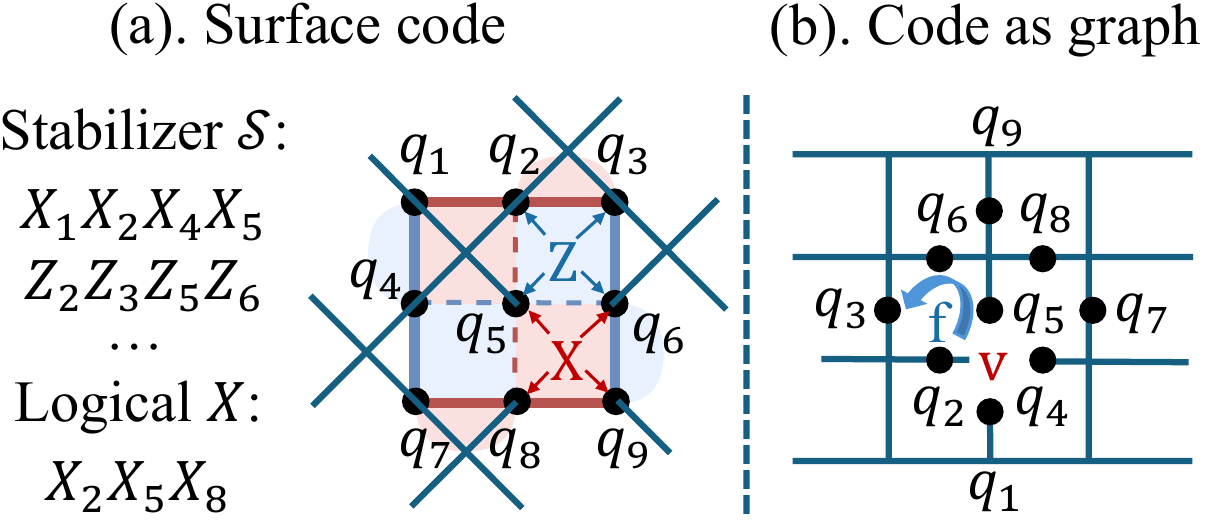}
    \caption{Example: surface code and its graph representation}
    \label{fig:asgraphsurface}
\end{figure}

\subsection{Stabilizer Formalism} \label{Sec:Stabilizer Formalism}
Let $\mathcal{P}_n$ denote the n-qubit Pauli group. A stabilizer code~\cite{gottesman1997stabilizer} is defined by an abelian subgroup:
\begin{equation*}
    \mathcal{S} = \langle S_1, \dots, S_m \rangle \subset \mathcal{P}_n, \quad -I \notin \mathcal{S}.
\end{equation*}
The code space is the simultaneous $+1$ eigenspace of all stabilizers, given by:
\begin{equation*}
    \mathcal{C} = \{\, |\psi\rangle : S |\psi\rangle = |\psi\rangle,\ \forall S \in \mathcal{S} \,\}.
\end{equation*}
States in $\mathcal{C}$ are referred to as stabilizer states.

Logical operators are Pauli operators that preserve the code space but are not contained in the stabilizer group. Specifically, a Pauli operator $L \in \mathcal{P}_n$ is a logical operator if it commutes with all stabilizers:
\begin{equation*}
    LS = SL, \quad \forall S \in \mathcal{S},
\end{equation*}
and $L \notin \mathcal{S}$. Let $\mathcal{N}(\mathcal{S})$ denote the normalizer of $\mathcal{S} \in \mathcal{P}_n$. Then the logical Pauli operators correspond to equivalence classes in the quotient space
$\mathcal{N}(\mathcal{S}) / \mathcal{S}$.
An example of the stabilizer and logical operator of a surface code is shown in Fig.~2(a).

\begin{figure*}[t]
    \centering
    \includegraphics[width=1.0\linewidth]{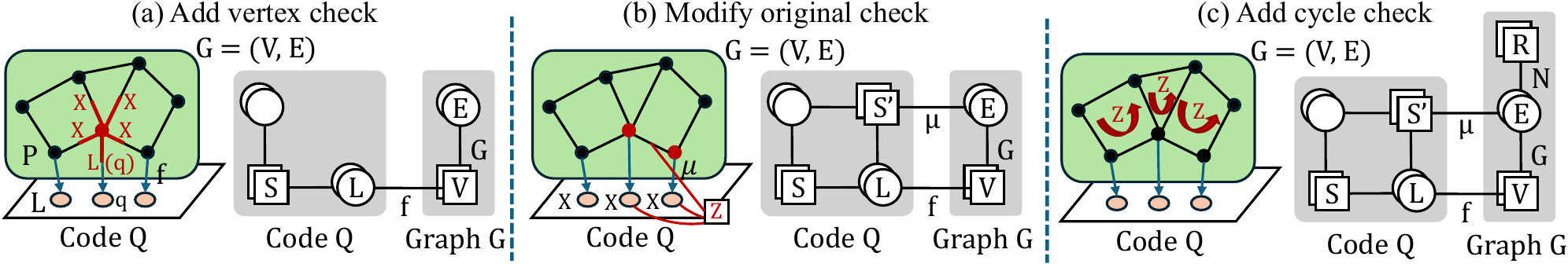}
    \caption{Ancilla system represented as a graph.}
    \label{fig:Ancillasystemasgraph_appendix}
\end{figure*}

\subsection{Codes as Graphs: A Unifying Perspective} \label{Sec:CodesasGraphs:AUnifyingView}

A key observation underlying this work is that the ancilla system constructed in code surgery admits a natural graph representation. This view provides an intuitive way to understand how stabilizers act on qubits and will be used throughout the paper.
In particular, we adopt the following abstraction for the graph representation of an ancilla system:
\begin{itemize}
    \item vertices correspond to $X(Z)$-type stabilizer checks,
    \item edges correspond to physical qubits,
    \item cycles in the graph correspond to $Z(X)$-type stabilizer checks.
\end{itemize}
This representation allows us to translate QEC code structure into a combinatorial object, making it amenable to algorithmic optimization. This correspondence is due to the underlying CSS structure of the surgery system, where the graph for surgery can be interpreted as a low-dimensional chain complex, with vertices, edges, and cycles corresponding to 0-, 1-, and 2-cells, respectively.
Under this correspondence, the action of stabilizers is encoded directly in the graph structure: each $X(Z)$-type vertex check acts on all incident edges of a vertex, while each $Z(X)$-type cycle check acts on the set of edges forming a cycle. As a result, the commutation structure of the code is reflected in the structure of the chain complex, where the composition of consecutive boundary operators vanishes.

To simply illustrate this abstraction, we use the surface code shown in Figure~\ref{fig:asgraphsurface}. On the left is a surface code patch, where red regions denote $X$-type stabilizers and blue regions denote $Z$-type stabilizers, each acting on a set of neighboring qubits (shown as black dots).
This structure naturally admits a graph interpretation on the right. In this graph, vertices correspond to $X$-type stabilizers, edges correspond to physical qubits, and faces (i.e., cycles in the underlying graph) correspond to $Z$-type stabilizers. Under this mapping, each $X$ stabilizer acts on the edges incident to a vertex, while each $Z$ stabilizer acts on the edges along the boundary of a face.
We can view the graph on the right as being overlaid onto the surface code patch on the left, establishing a one-to-one correspondence between graph elements and code structure. 

This perspective makes explicit how the stabilizer structure is encoded combinatorially and serves as the basis for the graph-based formulation of code surgery. Importantly, it enables us to treat code surgery as a graph synthesis problem, where optimizing the graph directly improves resource efficiency.

\subsection{General Code Surgery} \label{Sec:GeneralCodeSurgery}

Code surgery~\cite{cohen2022low, cross2024improved, he2025extractors, williamson2024low} realizes logical operations by measuring logical operators through an ancilla system. Intuitively, it transforms a high-weight logical operator into a set of low-weight stabilizer measurements that can be executed fault-tolerantly.
At a high level, the procedure works as follows. Starting from an original code $\mathcal{Q}$, we introduce an ancilla system that interacts with the support of a logical operator $L$ (i.e., the qubits involved in $L$). By appropriately coupling the ancilla system to the code, we obtain a deformed code $\widetilde{\mathcal{Q}}$ in which $L$ is promoted to a stabilizer and can be measured through the syndrome measurement of $\widetilde{\mathcal{Q}}$. 

From the graph perspective introduced in the previous section, this procedure corresponds to attaching an ancilla graph to the logical operator. The vertices of the graph introduce new checks that couple to $L$, while the cycles of the graph introduce additional constraints that fix the gauge degrees of freedom. Meanwhile, some of the original checks are modified to ensure that all stabilizers remain commuting. As a result, the logical operator is no longer supported solely on the original qubits, but is distributed across the ancilla graph and can be measured via local checks.

We now formally define the graph for code surgery as shown in Figure~\ref{fig:Ancillasystemasgraph_appendix}.
We denote the original stabilizer code by $\mathcal{Q}$ with stabilizer group $\mathcal{S}$. Let $L$ be the logical operator of interest, represented as a Pauli string supported on a set of physical qubits $q_i$. We write $L(q_i)$ to denote the Pauli operator acting on the qubit $q_i$. Then the ancilla system is represented as a graph $G = (V, E)$ attached to the original code $\mathcal{Q}$, defined via a bijective map $f : L \to P$, which connects a subset of vertices, referred to as ports $P$, to the support of the logical operator. This construction is illustrated in Fig.~\ref{fig:Ancillasystemasgraph_appendix}(a).

We now formally define the code surgery construction by interpreting the graph $G = (V, E)$ together with the original code $\mathcal{Q}$ as a deformed stabilizer code $\widetilde{\mathcal{Q}}$.

\textit{\textbf{First}}, we introduce vertex checks. We assume that we are measuring an $X$ logical operator; the case of measuring a $Z$ operator follows by simply exchanging the roles of $X$ and $Z$ in the following discussion. For non-CSS stabilizer codes, no modification is required.  Each vertex $v \in V$ corresponds to a check, and each edge $e \in E$ corresponds to a qubit. For $v \in V$, if $v \notin P$, we add the stabilizer
$A_v = \prod_{e \ni v} X(e)$
to the deformed stabilizer group $\widetilde{\mathcal{S}}$; if $v = f(q) \in P$, we instead add
$A_v = L(q)\prod_{e \ni v} X(e)$
to $\widetilde{\mathcal{S}}$. An example is illustrated in Fig.~\ref{fig:Ancillasystemasgraph_appendix}(a), where the vertex highlighted in red enforces an $X$-parity check on its adjacent edges ($e \ni v$), together with the Pauli operator on the logical support of $L$ specified by the map $f$. Since we consider connected graphs, the product of all vertex checks yields the logical operator to be measured. An alternative high-level representation is shown on the right of Fig.~\ref{fig:Ancillasystemasgraph_appendix}(a), where double squares denote checks and double circles denote qubits. After introducing the vertex checks, check $V$ is connected to the logical operator $L$ via the map $f$, and to the qubit $E$ according to the graph G.

\textit{\textbf{Second}}, we modify the original stabilizers to ensure that the deformed stabilizer group remains commuting. This is necessary because the vertex checks introduced in the previous step may anti-commute with some stabilizers in the original code. The procedure is known as constructing \textbf{path-matching graph}. For each stabilizer $S \in \mathcal{S}$, let $K(S, L)$ denote the set of qubits $q$ on which $S(q)$ and $L(q)$ anti-commute. Note that $|K(S, L)|$ must be even. If $K(S, L) = \emptyset$, we directly include $S$ in $\widetilde{\mathcal{S}}$. Otherwise, let $\mu(S, L)$ be a path matching of $f(K(S, L))$ in $G$, and define the modified stabilizer $\bar{S} = S \prod_{e \in \mu(S, L)} Z(e)$, which is then added to $\widetilde{\mathcal{S}}$. An example is shown in Fig.~\ref{fig:Ancillasystemasgraph_appendix}(b), where the logical operator is $L = XXX$, and a stabilizer $S = ZZ$ anti-commutes with two $X$ operators in $L$'s physical support on the right. These two physical qubits form the set $K(S, L)$, and $f(K(S, L))$ corresponds to the two highlighted red vertices in $G$. A valid path matching $\mu(S, L)$ can be chosen as the edge connecting these two vertices. Consequently, the modified stabilizer is obtained by augmenting $S$ with $Z$ operators acting on the edge qubit along this path matching. The high-level representation is shown on the right of Fig.~\ref{fig:Ancillasystemasgraph_appendix}(b), where the modified stabilizer $S'$ is connected to the edge qubits $E$ via the path matching $\mu$.

\textit{\textbf{Third}}, the independent cycle basis of the graph G corresponds to the gauge qubits introduced during the code surgery. The associated logical operators can dress the logical operators of the original code, thereby potentially reducing the code distance. Therefore, it is necessary to fix this gauge freedom. Let $R$ be a cycle basis of $G$. For each cycle $C \in R$, we add the stabilizer $B_C = \prod_{e \in C} Z(e)$ to $\widetilde{\mathcal{S}}$. An example and its high-level representation are shown in Fig.~\ref{fig:Ancillasystemasgraph_appendix}(c), where three cycle checks are added to the deformed code.

We show examples in Figure~\ref{fig:codesurgery}. Figure~\ref{fig:codesurgery}(a) using lattice surgery on the surface code as an example. In this case, the ancilla system corresponds to an additional surface code patch (``Ancilla Patch''), which can be viewed as a graph as discussed in Section~\ref{Sec:CodesasGraphs:AUnifyingView}. 
This ancilla patch is attached to two surface code patches supporting logical operators $L_{X1}$ and $L_{X2}$, and effectively couples them. 
From the graph perspective, this corresponds to connecting two logical operators through an intermediate graph structure, which enables their joint measurement using only local interactions.
As a result, the product $L_{X1}L_{X2}$ is promoted to a stabilizer of the deformed code and can be measured via local checks. This demonstrates that lattice surgery is a special case of the general code surgery framework.

\begin{figure}[t]
    \centering
    \includegraphics[width=1.0\linewidth]{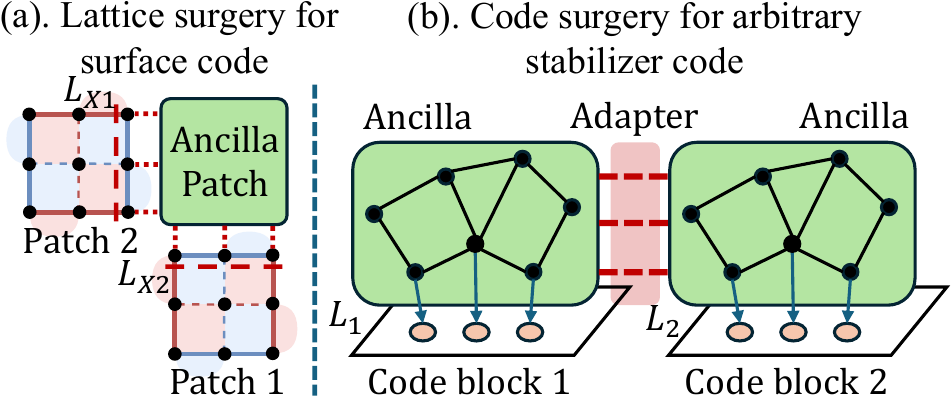}
    \caption{Overview of code surgery construction}
    \label{fig:codesurgery}
\end{figure}

\textbf{Code communications.} The same principle extends naturally to communication between different codes. To measure a joint logical operator $L_1 L_2$ between two codes $\mathcal{Q}_1$ and $\mathcal{Q}_2$, we first construct separate ancilla graphs for each logical operator and then connect them via an \emph{adapter}~\cite{swaroop2024universal} (or a \emph{bridge} when the codes belong to the same family), which links corresponding ports across the two graphs.
The overall procedure is illustrated in Figure~\ref{fig:codesurgery}(b), where two code blocks are connected via an adapter.
Each code block is associated with its own ancilla graph, and the adapter links the corresponding ports between them. 
From the graph viewpoint, this corresponds to coupling the two logical operators through the adapter edges, resulting in a deformed code in which $L_1 L_2$ is promoted to a measurable operator, enabling joint logical measurement and state transfer between the two codes.
As in the single-code setting, the efficiency of this process depends critically on the structure of the underlying ancilla graphs, motivating the need for optimized graph construction.

%% file: tex/3-ProblemFormulation.tex
\section{Motivation and Problem Formulation} \label{Sec:Motivation}

Code surgery constructs an ancilla system to enable fault-tolerant measurement of logical operators. While cross-code communication additionally requires connecting two ancilla systems via an adapter (or bridge), the size of the adapter is exactly the code distance $d$, and is therefore typically much smaller than the ancilla systems themselves~\cite{swaroop2024universal}. Therefore, in this work, we focus on optimizing the construction of the ancilla system, which dominates the overall resource cost.

As discussed in Section~\ref{Sec:CodesasGraphs:AUnifyingView}, the ancilla system can be represented as a graph whose structure determines both correctness and resource cost. This perspective allows us to formulate code surgery synthesis as a graph construction problem. Given a stabilizer code $\mathcal{Q}$ and a logical operator $L$, the code design objective is to construct an ancilla system (with additional qubits and new checks) such that:
\begin{itemize}
    \item the resulting deformed code preserves fault tolerance (e.g., code distance),
    \item the ancilla system uses minimal physical resources (numbers of qubits and checks),
    \item the construction satisfies the low-density parity-check (LDPC) constraints.
\end{itemize}

\begin{figure}[t]
    \centering
    \includegraphics[width=0.85\linewidth]{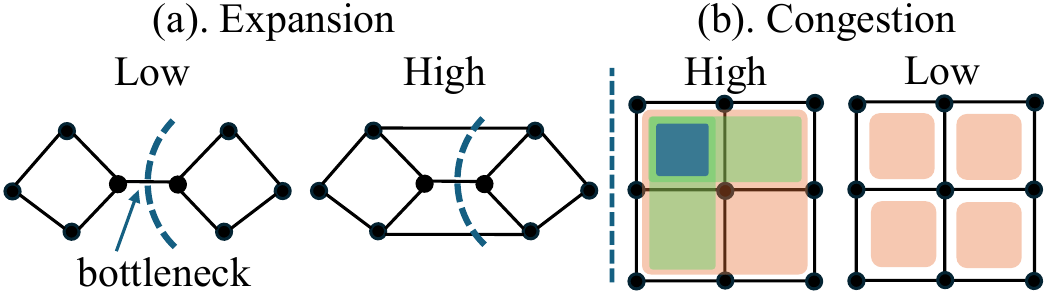}
    \caption{Expansion and congestion}
    \label{fig:Expansionandcongestion}
\end{figure}

Importantly, the objectives of code design can be translated into structural properties of the graph~\cite{he2025extractors, williamson2024low}. In particular, properties such as distance preservation, bounded stabilizer weight, and limited qubit participation correspond directly to expansion, graph degree, and cycle congestion in the graph. This translation enables us to reason about code surgery through graph properties.

\subsection{Design Objectives from the Graph Perspective}
Here, we provide an intuitive overview of the key graph properties used in this paper. Formal definitions of these concepts, along with the baseline graph construction algorithm~\cite{williamson2024low, he2025extractors}, are in Appendix~\ref{Sec:graphconstructionapp}.

\textbf{Expansion.}
The graph must have sufficient connectivity, meaning it cannot be separated into two large parts by removing only a small number of edges. As illustrated in Figure~\ref{fig:Expansionandcongestion}(a), low-expansion graphs contain bottlenecks where two large regions are connected by only one edge. In contrast, high expansion eliminates such bottlenecks, ensuring that every partition of the graph has a large edge boundary.
The expansion property can be quantified by the (relative) Cheeger constant of the graph (see Appendix~\ref{Sec:graphconstructionapp} for definition).
In our construction, we need the relative Cheeger constant to be larger than 1 to theoretically guarantee the preservation of code distance. Intuitively, this ensures that when a logical operator is ``pushed'' through the vertex checks, its support is mapped to the edge boundary of the corresponding vertex set; requiring the relative Cheeger constant to be at least $1$ guarantees that this boundary is no smaller than the original support, preventing any shrinkage and thus preserving the code distance.

\textbf{Congestion.}
The cycle structure is captured by the notion of congestion, which measures how much cycles in a chosen cycle basis overlap on edges. As illustrated in Figure~\ref{fig:Expansionandcongestion}(b), in the left example, the cycles (highlighted in green, blue, and orange) heavily overlap in the same region, with some edges shared by multiple cycles (e.g., up to four), resulting in high congestion. In contrast, in the right example, the cycles are more evenly distributed, and each edge is contained in only a constant number of cycles (e.g., at most two), leading to low congestion. Our goal is therefore to construct a cycle basis with low congestion, as congestion translates to qubit degree.
Intuitively, since each cycle corresponds to a check, heavy overlap among cycles implies that an edge (and thus its corresponding qubit) participates in many such checks, leading to a higher degree.

\textbf{Degree Constraints.}
The graph must satisfy bounded-degree conditions: each vertex has constant degree, and the cycles in the chosen cycle basis have constant length.

\subsection{Design Challenges}

These objectives are inherently coupled and often conflicting. Increasing expansion typically requires adding edges, which improves connectivity but also raises resource usage and may lead to higher congestion due to more overlapping cycles. At the same time, enforcing strict degree constraints limits the graph structure, potentially reducing expansion or introducing additional overhead. As a result, constructing an efficient graph requires carefully balancing these objectives rather than optimizing them independently.





\subsection{Problem Statement}

We formulate the code surgery synthesis problem as follows:
\begin{quote}
\emph{
Given a stabilizer code $\mathcal{Q}$ and a logical operator $L$, construct an ancilla graph $G$ that satisfies expansion, congestion, and degree constraints while minimizing the total resource overhead.
}
\end{quote}
This formulation highlights that code surgery is fundamentally a \emph{multi-objective graph optimization problem}. Existing constructions~\cite{williamson2024low, he2025extractors} prioritize worst-case guarantees and treat these objectives in a fixed pipeline, introducing significant redundancy in practice.
In contrast, our approach treats these objectives jointly and dynamically, enabling more efficient constructions tailored to the input code structure.

%% file: tex/4-OptimizingExpanderConstruction.tex
\section{Optimizing Expander Construction via Conditioned Randomization} \label{Sec:OptimizingExpanderConstructionviaConditionedRandomization}

\begin{figure}[b]
    \centering
    \includegraphics[width=1.0\linewidth]{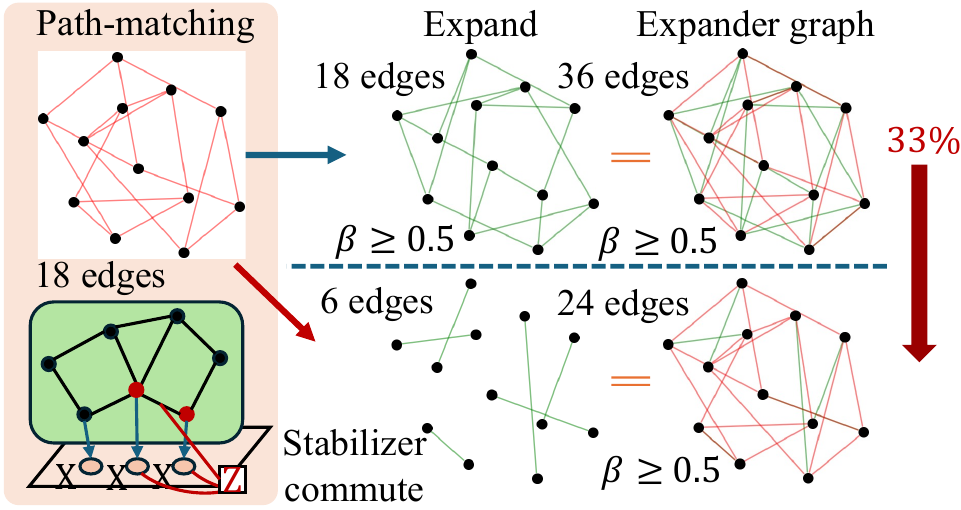}
    \caption{Expand on path-matching graph. $\beta$: Cheeger constant.}
    \label{fig:Pathmatching}
\end{figure}

In Section~\ref{Sec:Motivation}, we formulated code surgery synthesis as a graph construction problem, where one of the key objectives is to construct an ancilla graph that achieves sufficient expansion, namely \textit{expander}, while minimizing resource overhead.

In existing constructions, the graph is built in multiple stages. As illustrated in left of Figure~\ref{fig:Pathmatching}, the construction first introduces a \emph{path-matching graph} (see Section~\ref{Sec:GeneralCodeSurgery}), which arises from enforcing commutation between the original stabilizers and the newly introduced vertex checks. This initial graph already contains a nontrivial amount of connectivity, as logical operators are frequently checked by the opposite stabilizers.

However, prior approaches treat this graph merely as an intermediate object and construct a constant-degree graph (expander) independently on top of it (Figure~\ref{fig:Pathmatching}, top). This ignores the connectivity already present in the path-matching graph, yielding redundant edges and unnecessary overhead.

\subsection{Conditioned Expander Construction} \label{Sec:ConditionedExpanderConstruction}

We propose a low-overhead structure-aware expander construction algorithm that leverages the existing path-matching graph to achieve the expansion requirement. Instead of constructing a full expander independently, our approach progressively augments the graph by adding edges in a randomized manner while continuously monitoring its expansion. As illustrated at the bottom of Figure~\ref{fig:Pathmatching}, this process gradually improves connectivity using only a small number of additional edges until the desired expansion is achieved.

We present our construction in Algorithm~\ref{alg:ConditionedExpanderConstruction}, which leverages the uniform random graph generation algorithm of~\cite{steger1999generating}. Their approach generates $d$-regular graphs on $n$ vertices approximately uniformly at random.

Given a path-matching graph $G_0$, our goal is to construct a graph $G$ whose Cheeger constant exceeds a prescribed threshold $\beta$, while minimizing the number of added edges. In our construction, $\beta$ is chosen to be a constant smaller than $1$, as enforcing $\beta = 1$ directly would lead to excessive congestion. The procedure for boosting the expansion to the desired level ($\beta>1$) will be addressed in the Section~\ref{Sec:BalancingExpansionandCongestionthroughDynamicCycleTracking}.

\textbf{Key Idea.}
Instead of constructing an expander independently, we \emph{condition} the construction on the existing path-matching graph $G_0$. This allows us to reuse existing edges and avoid redundant connectivity. Our construction has two phases.

\textbf{Phase 1: Regularization to Maximum Degree.}
We first compute the maximum degree $\Delta$ over all vertices. Let $\deg_G(v)$ denote the degree of vertex $v$. We then iteratively add edges between vertices with remaining degree capacity.

At each step, we select a pair $\{u,v\}$ with $\deg_G(u) < \Delta$ and $\deg_G(v) < \Delta$, with probability proportional to $(\Delta - \deg_G(u))(\Delta - \deg_G(v))$, and add the edge to $G$.

After each addition, we evaluate the expansion of the graph.
Directly computing expansion (i.e., calculating the exact Cheeger constant) is computationally intractable~\cite{chung1997spectral}. Instead, we use the second smallest eigenvalue $\lambda_2$ of the graph Laplacian as a proxy. By Cheeger’s inequality,
\[
\beta_G \geq \lambda_2 / 2,
\]
so larger $\lambda_2$ implies better expansion.
We monitor $\lambda_2$ during construction and terminate once $\lambda_2 \geq 2\beta$.

\textbf{Phase 2: Degree Augmentation.}
If Phase 1 cannot reach the desired expansion, we continue augmenting the graph by adding random $1$-regular layers.
At each iteration, we generate a random matching and add edges progressively, checking $\lambda_2$ after each addition. The process stops as soon as the expansion requirement is satisfied.
This two-phase design balances two objectives:
\begin{itemize}
    \item Phase 1 maximally exploits existing structure,
    \item Phase 2 ensures expansion when structure alone is insufficient.
\end{itemize}
In practice, we repeat the randomized construction multiple times and select the smallest resulting graph.

\begin{algorithm}[t]
\caption{Conditioned Expander Construction}
\label{alg:ConditionedExpanderConstruction}
\KwIn{Path matching graph $G_0$, expansion threshold $\beta$, maximum number of iterations $\tau$}
\KwOut{Graph $G=(V, E)$ with $\lambda_2(G) \ge 2\beta$}

\textbf{Initialization:} $G \leftarrow G_0$; $iteration \leftarrow 0$\;

\textbf{Phase 1: Regularization to Maximum Degree}\;
Compute $\Delta \leftarrow \max_v \deg_G(v)$\;

\While{$\exists v \in V \text{ such that } \deg_G(v) < \Delta$}{
    Define deficit $\Delta(v) \leftarrow \Delta - \deg_G(v)$ for all $v \in V$\;

    Let $S = \{\{u,v\} \mid u \neq v,\ \Delta(u)>0,\ \Delta(v)>0\}$\;

    \textbf{if} $S = \emptyset$ \textbf{then break}\;

    Select $\{u,v\} \in S$ with probability proportional to $\Delta(u)\Delta(v)$\;

    Add edge $\{u,v\}$ to $G$; Compute $\lambda_2(G)$\;

    \textbf{if} $\lambda_2(G) \ge 2\beta$ \textbf{then return} $G$\;
}

\textbf{Phase 2: Degree Augmentation}\;

\While{$iteration < \tau$}{
    
    $iteration \leftarrow iteration + 1$; Let $U \leftarrow V$\; 

    \While{$U \neq \emptyset$}{
        Let $S = \{\{u,v\} \mid u,v \in U,\ u \neq v\}$\;

        \textbf{if} $S = \emptyset$ \textbf{then break}\;

        Select $\{u,v\} \in S$ uniformly at random\;

        Remove $u,v$ from $U$\;  
        Add edge $\{u,v\}$ to $G$;  Compute $\lambda_2(G)$\;

        \textbf{if} $\lambda_2(G) \ge 2\beta$ \textbf{then return} $G$\;
    }

}

\end{algorithm}

\subsection{Sampling Expanders with High Probability} \label{Sec:SamplingExpanderswithHighProbability}
Our expander construction Algorithm~\ref{alg:ConditionedExpanderConstruction} randomly samples edges to construct the graph.
Here we show that Algorithm~\ref{alg:ConditionedExpanderConstruction} produces an expander successfully with high probability.
We first define the probability space for our sampling algorithm.
\begin{definition}[Random Regular Graph Models]
Let $n,r \in \mathbb{N}$ with $nr$ even.

\textbf{(1) Simple $r$-regular graph.}
Let $\Omega_{n,r}$ denote the set of all simple $r$-regular graphs on $n$ labelled vertices, and let $\mathcal{F}_{n,r} = 2^{\Omega_{n,r}}$. Let $\mathbb{U}_{n,r}$ be the uniform distribution over $\Omega_{n,r}$. We define
\[
G(n,r) : (\Omega_{n,r}, \mathcal{F}_{n,r}, \mathbb{U}_{n,r}) \to (\Omega_{n,r}, \mathcal{F}_{n,r})
\]
as a random element taking values in $\Omega_{n,r}$, corresponding to a uniformly sampled simple $r$-regular graph, which can be obtained via the configuration model conditioned on simplicity.

\textbf{(2) Special case ($r=1$).}
When $r=1$, $G(n,1)$ corresponds to a uniformly random perfect matching on $n$ vertices.

\textbf{(3) Loopless multigraph model.}
Let $G'(n,r)$ denote the random multigraph generated by the configuration model conditioned on being loopless, while allowing multiple edges. This defines a probability space $(\Omega'_{n,r}, \mathcal{F}'_{n,r}, \mathbb{P}'_{n,r})$.
\end{definition}

In this probability space, we can model the sampling process with the following definition.
\begin{definition}[Union of Random Matchings]
Let $n,r \in \mathbb{N}$ with $n$ even. Let $G(n,1)$ denote a random perfect matching on $n$ vertices with probability space $(\Omega_{n,1}, \mathcal{F}_{n,1}, \mathbb{U}_{n,1})$. 

We define the random graph $\mathcal{I}(n,r)$ as the union of $r$ independent samples of $G(n,1)$:
$\mathcal{I}(n, r) = G_1 + G_2 + \cdots + G_r$,
where $+$ denotes graph union on a common vertex set.

Equivalently, $\mathcal{I}(n,r)$ can be viewed as a random element
\[
\mathcal{I}(n, r) : \left(\Omega_{n,1}^{\otimes r}, \ \mathcal{F}_{n,1}^{\otimes r}, \ \mathbb{U}_{n,1}^{\otimes r}\right)
\to (\Omega'_{n,r}, \mathcal{F}'_{n,r}),
\]
where $\Omega'_{n,r}$ denotes the space of loopless multigraphs on $n$ vertices with degree $r$.
\end{definition}

Finally, the following theorem states that our graph construction method samples an expander with high probability.

\begin{theorem}[Expansion via Reduction to $\mathcal{I}(n,r)$] \label{th:ExpansionviaReduction}
Let $G$ be the graph produced by Algorithm~\ref{alg:ConditionedExpanderConstruction}. Then, as $n \to \infty$, $G$ is an expander with probability $1 - o(1)$.
\end{theorem}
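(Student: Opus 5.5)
The plan is to reduce the output of Algorithm~\ref{alg:ConditionedExpanderConstruction} to the random model $\mathcal{I}(n,r)$ and then use known expansion results for that model. First we need monotonicity. Adding edges to a graph can only increase $\lambda_2$ of the Laplacian, because the Laplacian of the added edges is positive semidefinite; by the same argument it can only increase the Cheeger constant, since every edge boundary $|\partial S|$ is non-decreasing. So if the final graph $G$ contains, as a spanning subgraph, a graph $H$ that is an expander with probability $1-o(1)$, then $G$ is one as well. Early termination causes no difficulty: the algorithm stops only once $\lambda_2(G)\ge 2\beta$, and by Cheeger's inequality this already certifies $\beta_G\ge\beta$. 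The only event to control is therefore that the algorithm runs through Phase~2 for $r$ full layers without success, and we must show this has probability $o(1)$ for a suitable constant $r\ge 3$ (taking $\tau\ge r$).

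Next we identify $H$. We discard everything produced in Phase~1 together with the path-matching graph $G_0$, keeping them only as extra edges. Each completed iteration of Phase~2 is a uniformly random perfect matching of $V$ when $n$ is even. Indeed, repeatedly choosing a uniformly random pair from the unmatched set $U$ and removing it induces the uniform distribution on $\Omega_{n,1}$, by a direct count: every perfect matching arises from the same number $(n/2)!$ of orderings, each with equal probability. If $n$ is odd, one vertex is left over, and we handle this by adding a dummy vertex or by restricting to $n-1$ vertices, which changes expansion by only $O(1/n)$. Iterations use fresh randomness, so after $r$ failed layers the union of the Phase~2 edges is distributed exactly as $\mathcal{I}(n,r)$, independent of $G_0$ and of Phase~1. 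Hence
\[
\Pr[\text{no expander after } r \text{ layers}] \le \Pr[\mathcal{I}(n,r)\text{ has } \lambda_2 < 2\beta].
\]

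The main obstacle is bounding the right-hand side. Here I would invoke the contiguity results of Janson and of Molloy--Robalewska--Robinson--Wormald, which state that for fixed $r\ge 3$ the model $\mathcal{I}(n,r)$ is contiguous to $G'(n,r)$, and indeed to $G(n,r)$. Contiguity means any event of probability $o(1)$ in one model has probability $o(1)$ in the other. Friedman's theorem (or the earlier Broder--Shamir and Bollob\'as bounds) shows that $G(n,r)$ has second adjacency eigenvalue at most $2\sqrt{r-1}+\epsilon$ with probability $1-o(1)$. This gives $\lambda_2(L)\ge r-2\sqrt{r-1}-\epsilon$, which is bounded below by a positive constant for $r\ge3$. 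For a multigraph with few multi-edges the same holds through the loopless configuration model $G'(n,r)$.

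To finish, we choose $r$ large enough that $r-2\sqrt{r-1}-\epsilon\ge 2\beta$; this is possible because $\beta$ is a fixed constant below $1$. Combined with the monotonicity step, contiguity then yields that $G$ is an expander with probability $1-o(1)$. The delicate points are making sure the contiguity theorem is cited with the right hypotheses (loopless versus simple) and, if the Phase~2 matchings do not cover all of $V$, bounding the leftover-vertex effect.
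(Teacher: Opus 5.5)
Your proposal follows the same route as the paper's proof: the Phase~2 layers realize $\mathcal{I}(n,r)$, adding the edges of $G_0$ and Phase~1 can only increase $\lambda_2$ and the Cheeger constant, and a high-probability spectral bound for $\mathcal{I}(n,r)$ finishes. You also fill in details the paper leaves implicit: uniformity of the sequential pairing, $\tau \ge r$, and choosing $r$ with $r-2\sqrt{r-1}-\epsilon \ge 2\beta$. The one difference is the source of the spectral bound: the paper cites Friedman's result (Corollary~1.4) directly for $\mathcal{I}(n,r)$, which makes your contiguity detour unnecessary---and, as you suspected, the detour needs repair, since $\mathcal{I}(n,r)$ has a multi-edge with probability bounded away from zero, so only its conditioning on simplicity is contiguous to $G(n,r)$, and a high-probability statement for $G(n,r)$ does not by itself transfer to the unconditioned multigraph model.
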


\begin{proof}
We first observe that the graph constructed in Phase~2 in Algorithm~\ref{alg:ConditionedExpanderConstruction} can be viewed as a realization of $\mathcal{I}(n,r)$, the union of $r$ independent random perfect matchings on $n$ vertices. 
Moreover, since adding edges can only increase the expansion, the expansion of $G$ is at least that of $\mathcal{I}(n,r)$.
By Corollary~1.4 in~\cite{friedman2003proof}, $\mathcal{I}(n,r)$ is an expander with probability $1 - o(1)$ as $n \to \infty$. Therefore, $G$ is also an expander with high probability.
\end{proof}

\begin{remark}
In certain corner cases (e.g., due to parity constraints), a perfectly $d$-regular graph may not exist. In such cases, we allow a small deviation from regularity by assigning one vertex a degree higher by one than the others. Since this modification affects only a constant number of vertices, it does not alter the asymptotic expansion guarantees.
\end{remark}

%% file: tex/5-BalancingExpansionandCongestion.tex
\section{Balancing Expansion and Congestion} 
\label{Sec:BalancingExpansionandCongestionthroughDynamicCycleTracking}

In the previous section, we constructed an expander graph with reduced edge redundancy. However, this construction leaves two important issues unresolved. First, the achieved expansion is bounded by a constant $\beta < 1$, and must be further boosted to meet stronger expansion requirements. Second, the resulting graph may admit cycle basis with nontrivial congestion, rather than the desired constant congestion, as cycles can overlap significantly on certain edges.

\subsection{Graph Thickening and Its Overhead}

Graph thickening provides a mechanism to simultaneously boost expansion and reduce congestion. Starting from a base graph with Cheeger constant $\beta$ (e.g., $\beta = 0.5$ in Figure~\ref{fig:GraphThickenin}(a)), it constructs multiple copies of the graph and connects them across layers. By using more edges (i.e., more physical qubits in the synthesized ancilla system), this method increases the relative expansion to $\ell \cdot \beta$ after $\ell$ layers, while allowing cycles to be distributed across layers (e.g., the blue and yellow cycles in Figure~\ref{fig:GraphThickenin}(b)), thereby reducing overlap.

To quantify the amount of decongestion required, we partition the cycle basis of the base graph into edge-disjoint groups using a greedy procedure, and let $t$ denote the number of such groups. Intuitively, $t$ captures the number of layers needed to fully separate overlapping cycles. As a result, the total number of layers required by thickening is given by $\ell = \max(t, 1/\beta)$.

However, this introduces a fundamental trade-off as shown in right of Figure~\ref{fig:GraphThickenin}. A larger $\beta$ reduces the number of layers required for boosting expansion, but leads to higher congestion and thus requires more layers for decongestion. Conversely, a smaller $\beta$ simplifies decongestion but increases the number of layers needed to achieve sufficient expansion.

This observation motivates our main technique for this section: instead of fixing $\beta$ in advance, we dynamically monitor the congestion during the expander construction, and identify a trade-off point at which the layers required for boosting expansion and decongestion are balanced.

\begin{figure}[t]
    \centering
    \includegraphics[width=0.9\linewidth]{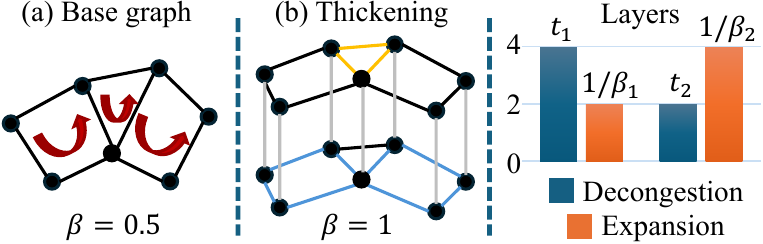}
    \caption{Graph thickening and the trade-off between decongestion and expansion}
    \label{fig:GraphThickenin}
\end{figure}

\subsection{Dynamic Cycle Tracking Algorithm} \label{Sec:DynamicCycleTrackingAlgorithm}

We again start with a path-matching graph $G_0$ and apply the Decongestion Lemma (Lemma A.0.2 in~\cite{freedman2021building}) to construct a cycle basis $\mathcal{R}_0$ for $G_0$. We can also calculate a spanning forest $T_0$ of $G_0$. 

Since our goal in this section is to monitor how the cycle basis evolves as new edges are added, we introduce a cycle-tracking algorithm (Algorithm~\ref{alg:cycle-tracking}). Given the current cycle basis $\mathcal{R}_i$, the spanning forest $T_i$, and a newly added edge $e_i = (u,v)$, the algorithm outputs an updated cycle basis $\mathcal{R}_{i+1}$ and a new spanning forest $T_{i+1}$.

\begin{algorithm}[t]
\caption{Dynamic Cycle Basis Maintenance}
\label{alg:cycle-tracking}

\KwIn{Cycle basis $\mathcal{R}_i$, spanning forest $T_i$, new edge $e_i=(u,v)$}
\KwOut{Updated cycle basis $\mathcal{R}_{i+1}$, spanning forest $T_{i+1}$, optionally a new cycle basis $r_{i+1}$}

\If{$u$ and $v$ are not connected in $T_i$}{
    $T_{i+1} \gets T_i \cup \{e_i\}$; $\mathcal{R}_{i+1} \gets \mathcal{R}_i$;   $ r_{i+1} \gets \phi$\;
}
\Else{
    Let $P_{T_i}(u,v)$ be the unique path between $u$ and $v$ in $T_i$; $r_{i+1} \gets P_{T_i}(u,v) \cup \{e_i\}$\;
    $\mathcal{R}_{i+1} \gets \mathcal{R}_i \cup \{r_{i+1}\}$; $T' \gets T_i \cup \{e_i\}$\;
    Remove an edge from $P_{T_i}(u,v)$ in $T'$; $T_{i+1} \gets T'$\;
}

\Return{$(\mathcal{R}_{i+1}, T_{i+1}, r_{i+1})$}\;

\end{algorithm}

Algorithm~\ref{alg:cycle-tracking} maintains the cycle basis dynamically. If $u$ and $v$ lie in different connected components of $T_i$, the edge $e_i$ is added to the forest and the cycle basis remains unchanged. Otherwise, $e_i$ induces a unique fundamental cycle formed by the path between $u$ and $v$ in $T_i$, which is appended to the cycle basis. To preserve the forest structure, an edge on the corresponding path is removed, optionally using a randomized or load-aware rule~\cite{wang2025cycle}.

\textbf{Maintaining Low Congestion.}
We now analyze the congestion of the cycle basis constructed by Algorithm~\ref{alg:cycle-tracking}, and present the following theorem.

\begin{theorem}[Low Congestion under Short Paths] \label{th:LowCongestionunderShortPaths}
Let $G$ be a graph with $n$ vertices and $m$ edges. If the spanning forest maintains diameter $O(\log n)$ throughout Algorithm~\ref{alg:cycle-tracking}, then the resulting cycle basis has congestion $O(\log n \cdot \log m)$ with probability at least $1/2$.
\end{theorem}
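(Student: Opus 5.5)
The plan is to bound, for each fixed edge $e$, the number of fundamental cycles that contain $e$, and then take a union bound over all $m$ edges. Throughout I assume the randomized removal rule in Algorithm~\ref{alg:cycle-tracking}: when a new cycle $r_{i+1}=P_{T_i}(u,v)\cup\{e_i\}$ is created, the edge deleted from $P_{T_i}(u,v)$ is chosen uniformly at random.

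\textbf{Key observation.} An edge $e$ can lie in a newly created cycle $r_{i+1}$ in only two ways:
\begin{itemize}
\item $e$ is the newly added edge $e_i$ itself. This happens at most once, after which $e$ enters the forest.
\item $e$ currently lies in $T_i$ on the path $P_{T_i}(u,v)$.
\end{itemize}
Once $e$ is deleted from the forest it never re-enters, because later steps only insert the new edge $e_i$. So the contribution of added cycles to the congestion of $e$ is at most $1+N_e$, where $N_e$ counts the steps at which $e$ is on the tree path and survives, plus the single step at which it is removed.

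Let $D=O(\log n)$ bound the forest diameter, as in the hypothesis. Every path $P_{T_i}(u,v)$ then has at most $D$ edges. Hence each time $e$ lies on such a path, it is removed with probability at least $1/D$, conditionally on the entire history.

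\textbf{Tail bound for one edge.} The removal probabilities depend on the history, so the events are not independent. I would avoid this by working with conditional probabilities directly. Order the steps at which $e$ lies on the tree path. By the chain rule, the probability that $e$ survives its first $k$ such steps is at most $(1-1/D)^k\le e^{-k/D}$. Therefore
\[
\Pr[N_e > kD] \le e^{-k},
\]
which says $N_e$ is stochastically dominated by a geometric variable with mean $D$. This step, making the domination rigorous despite the adaptive dependence on earlier random choices and on the edge sequence, is the one I expect to need the most care. An equivalent route is to define a supermartingale $(1-1/D)^{-(\text{number of survivals})}$ restricted to the steps where $e$ is on the path.

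\textbf{Union bound and initial basis.} Set $k=\ln(2m)$. A union bound over all $m$ edges gives that, with probability at least $1/2$, every edge satisfies $N_e\le D\ln(2m)$. On that event every edge lies in at most $1+D\ln(2m)=O(\log n\cdot\log m)$ added cycles.

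It remains to account for the initial basis $\mathcal{R}_0$, obtained from the Decongestion Lemma of~\cite{freedman2021building}. Its congestion is $O(\log^2 n)$ (more precisely polylogarithmic in $n$). Since the path-matching graph has $m=\Omega(n)$ edges, this is also $O(\log n\cdot\log m)$, so adding it to the bound above keeps the total congestion at $O(\log n\cdot\log m)$ with probability at least $1/2$, as claimed.
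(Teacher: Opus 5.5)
Your proposal is correct and follows the same argument as the paper. Each time $e$ lies on a forest path of length $O(\log n)$ it is deleted with probability $\Omega(1/\log n)$, so it survives $w=O(\log n\cdot\log m)$ cycles with probability at most $1/(2m)$, and a union bound over the $m$ edges gives congestion $O(\log n\cdot\log m)$ with probability at least $1/2$. Your extra bookkeeping makes explicit what the paper's short proof leaves implicit: the conditional chain-rule treatment of adaptivity, the single contribution of $e$ as the inserted edge, the fact that deleted edges never re-enter the forest absent BFS resets, and the $O(\log^2 n)$ contribution of $\mathcal{R}_0$.
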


\begin{proof}
Under the assumption that each path has length at most $O(\log n)$, the Decongestion Lemma implies that for any fixed edge $e$, the probability that it remains unremoved after $w$ cycles is at most
\[
\left(1 - \Omega\!\left(1/\log n\right)\right)^w.
\]
For $w = O(\log n \cdot \log m)$, this probability can be made at most $1/(2m)$. 

Applying a union bound over all edges, we conclude that with probability at least $1/2$, no edge participates in more than $O(\log n \cdot \log m)$ cycles. This establishes the desired congestion bound.
\end{proof}

The above theorem relies on the assumption that the spanning forest maintains short paths of length $O(\log n)$. However, this condition may not hold in general. In the worst case, an arbitrary spanning forest may contain paths of length $O(n)$, which invalidates the low-congestion guarantee.

In our setting, the underlying graph is an expander, which has diameter $O(\log n)$~\cite{hoory2006expander}. By recomputing the spanning forest a constant number of times (e.g., via BFS), we ensure that the spanning forest periodically returns to a low-diameter structure.
Nevertheless, between such resets, the spanning forest may drift due to edge updates. In the worst case, it may resemble a random-like forest with diameter $O(\sqrt{n})$~\cite{alon2022diameter}, leading to a degraded congestion bound of $O(\sqrt{n} \cdot \log m)$.

\textbf{Independence of the Cycle Basis.}
Since the spanning tree is dynamically modified (e.g., due to resets), it is not immediate that the cycles produced by Algorithm~\ref{alg:cycle-tracking} still form a valid cycle basis. In particular, we need to ensure that these updates do not destroy linear independence.

\begin{theorem}[Independence of the Constructed Cycle Basis] \label{Sec:IndependenceoftheConstructedCycleBasis}
Let $r_1, \ldots, r_k$ denote the cycle basis obtained by applying the Decongestion Lemma to the path-matching graph, and let $r_{k+1}, \ldots, r_l$ be the additional cycles generated by Algorithm~\ref{alg:cycle-tracking} during the expander construction process. Then the set $\{r_1, \ldots, r_l\}$ is linearly independent.
\end{theorem}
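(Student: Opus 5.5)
We argue inductively over the edge insertions $e_{k+1},e_{k+2},\dots$ performed during the expander construction, working in the cycle space of the current graph over $\mathbb{F}_2$. Each cycle is identified with its edge-indicator vector. Write $G_i$ for the graph after $i$ insertions, with $G_0$ the path-matching graph. The invariant we maintain is: \emph{every cycle produced so far contains at least one edge that is absent from the current forest $T_i$, and distinct cycles are ``charged'' to distinct non-forest edges in a triangular way}. Concretely, each newly generated cycle $r_{j}$ is the first cycle in the sequence to use the edge $e_j$ it was created for. This is immediate, since $e_j$ did not exist in any earlier graph, so no earlier cycle $r_1,\dots,r_{j-1}$ can contain it.

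\textbf{Base case.} The cycles $r_1,\dots,r_k$ produced by the Decongestion Lemma~\cite{freedman2021building} on $G_0$ form a cycle basis of $G_0$, so they are linearly independent by definition. They are also supported only on edges of $G_0$.

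\textbf{Inductive step.} Suppose $r_1,\dots,r_{j-1}$ are independent and are all supported on edges of $G_{j-1}$. By Algorithm~\ref{alg:cycle-tracking}, the new edge $e_j=(u,v)$ either joins two forest components, in which case no cycle is added, or produces $r_j=P_{T}(u,v)\cup\{e_j\}$. In the latter case $e_j\in r_j$, while $e_j\notin r_i$ for all $i<j$, because $e_j\notin E(G_{j-1})$. Hence any linear relation $\sum_i c_i r_i=0$ with $c_j=1$ is impossible: the coordinate of $e_j$ equals $c_j=1$. So $c_j=0$, and the induction hypothesis kills the remaining coefficients. The key point is that this argument never uses which tree edge gets removed, nor any BFS reset of the spanning forest. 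The only property of $T$ that matters is that $P_T(u,v)$ is a path in the current graph, so $r_j$ is a genuine cycle. Resets and edge removals therefore cannot destroy independence.

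\textbf{Expected obstacle.} The main subtlety is making sure the inserted edge is really a new coordinate. Algorithm~\ref{alg:ConditionedExpanderConstruction} may add an edge parallel to an existing one; for example, Phase~2 matchings can repeat edges, giving a multigraph. We handle this by treating each added edge as a distinct qubit/coordinate, which matches the code-surgery model where every edge is a physical qubit. Under that convention the triangular argument goes through unchanged.

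As an optional sanity check on counting, we can verify that $l$ equals $|E|-|V|+c(G)$. Each cycle-producing insertion keeps the forest a spanning forest, with the same component count and one edge swapped. Each non-cycle insertion lowers the component count by one. Together with independence, this shows $\{r_1,\dots,r_l\}$ is in fact a basis of the final cycle space.
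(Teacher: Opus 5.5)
Your argument is correct and is essentially the paper's proof: each new cycle $r_j$ contains the freshly inserted edge $e_j$, which appears in no earlier cycle, so its coefficient in any vanishing combination must be zero. Peeling these cycles off leaves only the Decongestion-Lemma basis $r_1,\dots,r_k$; the paper peels downward from $r_l$ while you induct upward, which is the same triangular argument. Your convention that parallel edges are distinct coordinates and your count $l=|E|-|V|+c(G)$ are useful additions the paper leaves implicit. Note, though, that your opening ``non-forest edge'' invariant is loosely phrased (Algorithm~\ref{alg:cycle-tracking} keeps $e_j$ in $T_j$) and is not actually used.
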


\begin{proof}
Consider a linear combination
\[
a_1 r_1 + \cdots + a_k r_k + a_{k+1} r_{k+1} + \cdots + a_l r_l = 0.
\]
We show that all coefficients must be zero.

For each $i > k$, the cycle $r_i$ is formed using a newly added edge that does not appear in any previously constructed cycles. In particular, the most recently added cycle $r_l$ contains an edge that appears uniquely in $r_l$, implying that $a_l = 0$.

Applying the same argument inductively, we obtain $a_{l-1} = \cdots = a_{k+1} = 0$. Since $r_1, \ldots, r_k$ form a basis by construction, it follows that $a_1 = \cdots = a_k = 0$. This completes the proof.
\end{proof}

\subsection{Dynamically Compute Decongestion Layer} \label{Sec:DynamicallyComputeDecongestionLayer}

After updating the cycle basis $\mathcal{R}_{i+1}$, we compute the layer $t_{i+1}$ for decongestion at this step.  We attempt to place the new cycles into existing partition groups without overlap; if not possible, we create a new group (see Algorithm~\ref{alg:incremental-partition}). 

\begin{algorithm}[t]
\caption{Dynamic Greedy Cycle Partition}
\label{alg:incremental-partition}

\KwIn{Existing partition $\mathcal{R}_i = \bigcup_{k=1}^{t_i} P_k$, new cycle $r_{i+1}$}
\KwOut{Updated partition $\mathcal{R}_{i+1} = \bigcup_{k=1}^{t_{i+1}} P_k$, decongestion layer $t_{i+1}$}

\For{$k \leftarrow 1$ \KwTo $t_i$}{
    \If{$r_i$ does not overlap with any cycle in $P_k$}{
        $P_k \leftarrow P_k \cup \{r_{i+1}\}$\;
        \Return{$ \bigcup_{k=1}^{t_i} P_k$, $t_i$}\;
    }
}

$P_{t_i+1} \leftarrow \{r_{i+1}\}$\;
\Return{$\bigcup_{k=1}^{t_i+1} P_k$, $t_{i}+1$}\;

\end{algorithm}

\textbf{Thickening Layer Upper Bound.}
We calculate the upper bound on the decongestion layer $t$ by reducing our Dynamic Greedy Cycle Partition to a known Greedy Partition Algorithm (Corollary~15 in~\cite{he2025extractors}).
This known Greedy Partition Algorithm provides an upper bound on the thickening layer $t$, which is given by the product of the average cycle length and the cycle basis congestion.
We show that recursively applying our progressive partition procedure is functionally equivalent when combined with our congestion-aware expander construction (see Algorithm~\ref{alg:CongestionAwareExpanderConstruction} later in Section~\ref{Sec:FindingtheExpansion-CongestionTrade-offPoint}). 

\begin{lemma} \label{lemma:incre}
The progressive partition procedure (Algorithm~\ref{alg:incremental-partition}) produces the same partition as the Greedy Partition algorithm (see Corollary~15 in~\cite{he2025extractors}) applied to the full set of cycles.
\end{lemma}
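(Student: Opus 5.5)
The plan is to identify the Greedy Partition algorithm of Corollary~15 in~\cite{he2025extractors} as a \emph{first-fit} procedure over a fixed ordering of the cycles. It scans the cycles $r_1, r_2, \ldots, r_l$ in order and puts each cycle into the lowest-indexed group containing no cycle that shares an edge with it. If no such group exists, it opens a new group. We apply it to the full cycle basis $\{r_1,\ldots,r_l\}$, ordered as in Theorem~\ref{Sec:IndependenceoftheConstructedCycleBasis}: first the Decongestion-Lemma cycles $r_1,\ldots,r_k$ of $G_0$, then the cycles $r_{k+1},\ldots,r_l$ in the order Algorithm~\ref{alg:cycle-tracking} emits them. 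The claim then reduces to showing that Algorithm~\ref{alg:incremental-partition}, invoked once per new cycle, performs exactly the same sequence of placement decisions.

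The proof is an induction on $i$. The inductive hypothesis is that after the $i$-th cycle has been handled, the partition $\bigcup_{k=1}^{t_i} P_k$ maintained by Algorithm~\ref{alg:incremental-partition} equals, group by group and with the same indices, the partition produced by the offline greedy after its first $i$ steps. In the base case both procedures start from the same initial partition of $r_1,\ldots,r_k$. We simply initialize the progressive procedure by running the offline greedy on these cycles, or equivalently by feeding them one at a time.

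For the inductive step, consider the new cycle $r_{i+1}$. Algorithm~\ref{alg:incremental-partition} scans $k=1,\ldots,t_i$ in increasing order and inserts $r_{i+1}$ into the first group none of whose cycles shares an edge with it. Otherwise it creates $P_{t_i+1}$. This is literally the offline rule applied to the same current groups, so both procedures place $r_{i+1}$ identically. The count $t_{i+1}$ agrees as well. (The pseudocode writes $r_i$ in the overlap test; read this as $r_{i+1}$.)

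The one point needing care is that the offline overlap test must depend only on cycles already placed, and not on cycles that appear later. This holds because edge-overlap between two cycles is a fixed property of their edge sets. Moreover, Algorithm~\ref{alg:cycle-tracking} never alters a cycle once it is emitted: it only appends $r_{i+1}$, and the spanning-forest edits and resets affect $T_i$, not $\mathcal{R}_i$. So earlier greedy decisions are never invalidated by later edges.

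I expect the main obstacle to be aligning the two algorithms' conventions: the ordering of the cycles and the tie-breaking rule. The lemma holds only if Corollary~15 processes cycles in an arbitrary but fixed order with first-fit group choice. My first step would be to verify that reading of~\cite{he2025extractors}. If their algorithm uses a different order, for example sorting cycles by length, I would weaken the claim: the progressive partition is \emph{a} valid output of their greedy under the generation order. Their bound on $t$ (average cycle length times congestion) is order-independent, since it only uses that each cycle overlaps at most (length)$\times$(congestion) other cycles. So the thickening-layer upper bound still transfers.
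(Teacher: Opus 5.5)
Your proof is correct and follows the same route as the paper. Both argue by induction on the number of cycles, using the fact that appending a cycle at the end of the processing order leaves earlier placements unchanged, so the new cycle is placed first-fit into the groups already formed. The paper phrases Greedy Partition group by group (with $r_t$ checked last within each group) rather than as first-fit, but for a fixed processing order these two formulations give identical partitions, so your reading of Corollary~15 needs no weakening.
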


\begin{proof}
We prove the claim by induction on the number of cycles.
Let the cycles be ordered as $\mathcal{R} = \{r_l, \ldots, r_1\}$.

\textbf{Base case.} For $l = 1$, both algorithms place $r_1$ into the first group, and hence the resulting partitions are identical.

\textbf{Inductive step.} Assume processing $t-1$ cycles, both algorithms produce the same partition $\{P_1, \ldots, P_k\}$. Now we consider processing $t$ cycles.

In the Greedy Partition algorithm, observe that $r_t$ is always the last cycle to be checked within each partition group, and therefore does not affect the placement of any preceding cycles. Therefore, its placement can be viewed as occurring after all operations on the first $t-1$ cycles have been completed. By the induction hypothesis, the partition formed by the first $t-1$ cycles is identical to that produced by our incremental algorithm. 

Thus, after inserting $r_t$, the partitions produced by both algorithms remain the same. By induction, the two algorithms produce identical partitions for all cycles.
\end{proof}

As a result, our method inherits the same upper bound  $O((\log( n))^3)$ on the decongestion layer $t$ per Lemma~\ref{th:LowCongestionunderShortPaths}.
We highlight that this equivalence is non-trivial. A naïve approach would recompute the greedy partition after each insertion, incurring significant overhead. In contrast, our method updates the partition incrementally by only checking the newly added cycle against existing groups, naturally accommodating our dynamic cycle tracking algorithm and reducing the complexity by a factor of $O(n)$ without changing the result.

\subsection{Finding Expansion--Congestion Balance Point} \label{Sec:FindingtheExpansion-CongestionTrade-offPoint}

Using the procedure for tracking the decongestion layer after each edge insertion, we propose Algorithm~\ref{alg:CongestionAwareExpanderConstruction} that automatically identifies the expansion-congestion balancing point during the expander construction.
Leveraging the two techniques above, it augments Algorithm~\ref{alg:ConditionedExpanderConstruction} by, after adding each edge, checking the number of layers required for boosting expansion, $\lceil 2/\lambda_2(G_i) \rceil$, and comparing it with the number of layers required for decongestion, $t_i$. Since the former is monotonically decreasing while the latter is monotonically increasing, we terminate once the latter exceeds the former.
No physical resources are wasted on an unbalanced construction.

\begin{algorithm}[h]
\caption{Congestion--Aware Expander Construction}
\label{alg:CongestionAwareExpanderConstruction}

\KwIn{Path-matching graph $G_0$, BFS tree resetting step $\tau$, maximum number of iterations $\tau$}
\KwOut{Graph $G$ with balanced expansion and congestion layers}

\textbf{Initialization:} $G \leftarrow G_0$; $step \leftarrow 0$; $iteration \leftarrow 0$\;
Compute $\mathcal{R}_0 \leftarrow$ Decongestion Lemma applied to $G$\;
Compute partition $\mathcal{R}_0 = \bigcup_{k=1}^{t_0} P_k$ using Greedy Partition\;
Compute $T_0 \leftarrow$ BFS forest of $G$\;

\textbf{Phase 1: Regularization to Maximum Degree}\;
Compute $\Delta \leftarrow \max_{v} \deg_G(v)$\;

\While{$\exists v \in V \text{ such that } \deg_G(v) < \Delta$}{
    Define deficit $\Delta(v) \leftarrow \Delta - \deg_G(v)$ for all $v \in V$\;

    Let $S = \{\{u,v\} \mid u \neq v,\ \Delta(u)>0,\ \Delta(v)>0\}$\;

    \textbf{if} $S = \emptyset$ \textbf{then break}\;

    Select $\{u,v\} \in S$ with probability proportional to $\Delta(u)\Delta(v)$, add edge $\{u,v\}$ to $G$\;

    Compute $\lambda_2(G)$\;

    $(\mathcal{R}_{i+1}, T_{i+1}, r_{i+1}) 
    \leftarrow$ Algorithm~\ref{alg:cycle-tracking}$(\mathcal{R}_i, T_i, (u,v))$\;

    $\mathcal{R}_{i+1} = \bigcup_{k=1}^{t_{i+1}} P_k,\;
    t_{i+1} \leftarrow$ Algorithm~\ref{alg:incremental-partition}$(\mathcal{R}_i, r_{i+1})$\;

    \If{$t_{i+1} \ge \lceil 2/\lambda_2(G) \rceil$}{
        \Return $G$\;
    }

    $step \leftarrow step + 1$\;

    \If{$step \bmod \tau = 0$}{
        $T_{i+1} \leftarrow$ BFS forest of $G$\;
    }
}

\textbf{Phase 2: Degree Augmentation}\;

\While{$iteration < \tau$}{
    $iteration \leftarrow iteration + 1$; Let $U \leftarrow V$\;

    \While{$U \neq \emptyset$}{
        Let $S = \{\{u,v\} \mid u,v \in U,\ u \neq v\}$\;

        \textbf{if} $S = \emptyset$ \textbf{then break}\;

        Select $\{u,v\} \in S$ uniformly at random\;

        Remove $u,v$ from $U$, add edge $\{u,v\}$ to $G$\;

        Compute $\lambda_2(G)$\;

        $(\mathcal{R}_{i+1}, T_{i+1}, r_{i+1}) 
        \leftarrow$ Algorithm~\ref{alg:cycle-tracking}$(\mathcal{R}_i, T_i, (u,v))$\;

        $\mathcal{R}_{i+1} = \bigcup_{k=1}^{t_{i+1}} P_k,\;
        t_{i+1} \leftarrow$ Algorithm~\ref{alg:incremental-partition}$(\mathcal{R}_i, r_{i+1})$\;

        \If{$t_{i+1} \ge \lceil 2/\lambda_2(G) \rceil$}{
            \Return $G$\;
        }

        $step \leftarrow step + 1$\;

        \If{$step \bmod \tau = 0$}{
            $T_{i+1} \leftarrow$ BFS tree of $G$\;
        }
    }
}

\end{algorithm}

\begin{remark}
An alternative approach is to recompute a cycle basis using the Decongestion Lemma after each edge insertion, followed by running greedy partition algorithm. While the Decongestion Lemma admits a polynomial-time implementation and provides strong theoretical guarantees on low congestion, this approach incurs significantly higher compilation time compared to our dynamic cycle basis maintenance algorithm in practice. In particular, it requires recomputing the entire cycle basis at each step, whereas our method only performs progressive updates. The compilation quality of two approaches are similar in practice, we provide both implementations in our software.
\end{remark}

%% file: tex/6-IncorporatingDegreeConstraints.tex
\section{Incorporating Degree Constraints}
\label{Sec:IncorporatingDegreeConstraintsintoPracticalCodeDesign}

Finally, we incorporate the LDPC constraints to achieve more resource-efficient decongestion and cellulation. 
In Figure~\ref{fig:Cellulation}(a), allowing controlled overlap under degree constraints enables more flexible cycle placement, reducing the number of layers compared to strictly non-overlapping decongestion. In Figure~\ref{fig:Cellulation}(b), degree-aware cellulation replaces uniform triangulation with higher-degree faces, reducing the number of checks (e.g., from $6$ to $2$) while respecting the degree bound ($5$ in this example). These observations motivate incorporating degree constraints into both decongestion and cellulation.

\subsection{Degree--Aware Cycle Basis Partition} \label{Sec:Degree--AwareCycleBasisPartition}

We first incorporate degree constraints into cycle partitioning, which is central to the decongestion process. Observe that two types of cycle partitioning are used in Algorithm~\ref{alg:CongestionAwareExpanderConstruction}: a static partition performed after constructing the path-matching graph, and a dynamic partition (Algorithm~\ref{alg:incremental-partition}) applied each time a new edge is sampled, generating a new cycle that is then assigned to a partition group.

To this end, we introduce a tracker for edge load within each cycle group $P_k$ in the partition $\mathcal{R}_i = \bigcup_{k=1}^{t_i} P_k$. This tracker records the number of cycles in which each edge participates within a given group. 
When attempting to place a new cycle into a group $P_k$, we examine each edge in the cycle and check whether its load would exceed a prescribed maximum $d_q$ qubit degree determined by the LDPC degree constraints. If all edges satisfy this condition, the cycle is assigned to $P_k$; otherwise, it is considered for the next group. We present this method in Algorithm~\ref{alg:load-incremental-partition}.

\begin{algorithm}[t]
\caption{Degree-Constrained Cycle Partition}
\label{alg:load-incremental-partition}

\KwIn{Existing partition $\mathcal{R}_i = \bigcup_{k=1}^{t_i} P_k$, new cycle $r_{i+1}$, load trackers $\{\ell_k\}_{k=1}^{t_i}$, max degree load $d_q$}
\KwOut{Updated partition $\mathcal{R}_{i+1} = \bigcup_{k=1}^{t_{i+1}} P_k$, updated trackers $\{\ell_k\}_{k=1}^{t_{i+1}}$, decongestion layer $t_{i+1}$}

\For{$k \leftarrow 1$ \KwTo $t_i$}{
    \If{
        $\forall e \in r_{i+1},\ \ell_k(e) + 1 \le d_q$}{
        
        $P_k \leftarrow P_k \cup \{r_{i+1}\}$\;
        
        \textbf{foreach} $e \in r_{i+1}$ \textbf{do}  $\ell_k(e) \leftarrow \ell_k(e) + 1$\;
        
        \Return{$\bigcup_{k=1}^{t_i} P_k,\; \{\ell_k\}_{k=1}^{t_i},\; t_i$}\;
    }
}

$P_{t_i+1} \leftarrow \{r_{i+1}\}$; Define a new tracker $\ell_{t_i+1}$\;
\textbf{foreach} $e \in r_{i+1}$ \textbf{do} $\ell_{t_i+1}(e) \leftarrow 1$\;

\Return{$\bigcup_{k=1}^{t_i+1} P_k,\; \{\ell_k\}_{k=1}^{t_i+1},\; t_i+1$}\;

\end{algorithm}

\begin{figure}[t]
    \centering
    \includegraphics[width=0.9\linewidth]{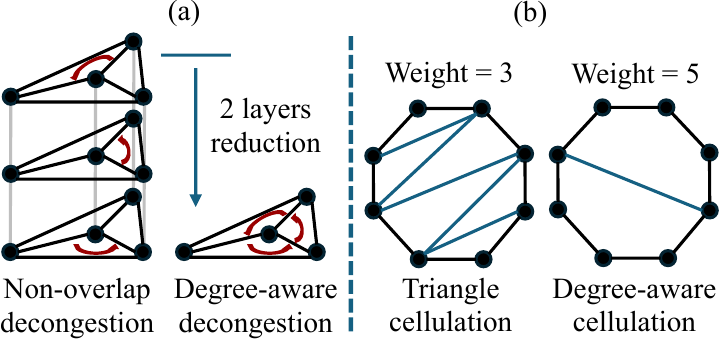}
    \caption{Degree-aware graph optimization}
    \label{fig:Cellulation}
\end{figure}

\subsection{Degree--Aware Cellulation} \label{Sec: Degree--Aware Cellulation}

We now incorporate check degree constraints into the cellulation process. In the theoretical construction~\cite{williamson2024low, he2025extractors}, cycles are cellulated into triangles using a zigzag procedure, ensuring that each cycle check acts on exactly three qubits.

To better match check degree constraints, we generalize this construction by replacing triangular cellulation with $d_c$-gon cellulation, where $d_c$ denotes the maximum allowed check degree. The goal is to ensure that each check acts on at most $d_c$ qubits.
Our construction follows the zigzag pattern as in the triangular case. Given a cycle $C = (v_1, v_2, \ldots, v_\ell)$, we introduce auxiliary edges in a structured zigzag manner to decompose the cycle into a collection of smaller cycles, each involving at most $d_c$ vertices. Instead of forming triangles by connecting vertices in a stepwise zigzag fashion, we extend the pattern to span $d_c$ vertices before closing each face.
This zigzag $d_c$-gon cellulation ensures that each resulting cycle check has weight at most $d_c$, as required by the LDPC constraints. 
Thus, by extending the same zigzag pattern from the triangular case, we can directly control the maximum check weight without altering the overall construction.

%% file: tex/7-Evaluation.tex
\section{Evaluation} \label{Sec:Evaluation}

\begin{table}[t]
  \centering
  \caption{Benchmark information}
      {\fontsize{8.5}{11}\selectfont
    \begin{tabular}{|m{1.8cm}<{\centering}|c|c|c|}
    \hline
    Benchmark & BB code  & HP code & QT code\\
    \hline
    \multirow{5}{*}{[[n, k, d]]} & [[72, 12, 6]]  & [[450, 32, 8]] & [[81, 13, 4]]\\ \cline{2-4}
                                    & [[90, 8, 10]] & [[882, 50, 10]] & [[90, 15, 4]] \\ \cline{2-4}
                                    & [[108, 8, 10]] & [[882, 98, 8]] & [[117, 17, 4]]\\ \cline{2-4}
                                    & [[144, 12, 12]] & [[1800, 72, 14]] & [[126, 19, 4]]\\ \cline{2-4}
                                    & [[288, 12, 18]] & [[1922, 200, 12]] & [[135, 23, 4]]\\ \cline{2-4}
    \hline
    Benchmark & \multicolumn{2}{c|}{Surface code} & Color code\\
    \hline
    \multirow{5}{*}{[[n, k, d]]} & [[25, 1, 5]]  & [[441, 1, 21]] & [[19, 1, 5]]\\ \cline{2-4}
                                    & [[49, 1, 7]] & [[1681, 1, 41]] & [[37, 1, 7]] \\ \cline{2-4}
                                    & [[81, 1, 9]] & [[3721, 1, 61]] & [[61, 1, 9]]\\ \cline{2-4}
                                    & [[121, 1, 11]] & [[6561, 1, 81]] & [[91, 1, 11]]\\ \cline{2-4}
                                    & [[169, 1, 13]] & [[10201, 1, 101]] & [[127, 1, 13]]\\ \cline{2-4}
   
    \hline
    
    \end{tabular}%
    }
  \label{tab:benchmark}%
\end{table}

\begin{table*}[t]
  \centering
  \caption{Resources to implement the ancilla system and the compiler runtime}
      {\fontsize{8}{10}\selectfont
    \begin{tabular}{|m{1.6cm}<{\centering}|c|c|c|c|c|c|c|c|c|c|c|c|c|c|c|}
    \hline
    \textbf{BB code} & \multicolumn{3}{c|}{[[72, 12, 6]] ($89\%$)} & \multicolumn{3}{c|}{[[90, 8, 10]] ($88\%$)} & \multicolumn{3}{c|}{[[108, 8, 10]] ($88\%$)} & \multicolumn{3}{c|}{[[144, 12, 12]] ($90\%$)} & \multicolumn{3}{c|}{[[288, 12, 18]] ($88\%$)}\\
    \hline
     & Size & Deg & Time  & Size & Deg & Time  & Size & Deg & Time  & Size & Deg & Time  & Size & Deg & Time   \\ \hline
    CKBB  & 84/81 & 7/7 & 0.24 & 240/235 & 7/7 & 0.24 & 348/342 & 7/7 & 0.24 & 348/342 & 7/7 & 0.24 & 792/783 & 7/7 & 0.24 \\\hline
    Gauge & 90/91 & 8/8 & 0.05 & 157/158 & 8/10 & 0.06 & 185/186 & 8/9 & 0.07 & 239/240 & 8/10 & 0.07 & 362/363 & 8/11 & 0.09  \\\hline
    Exp-Opt  & 42/43 & 7/7 & 0.06 & 100/101 & 7/7 & 0.07 & 153/154 & 7/7 & 0.08 & 122/123 & 7/8 & 0.20 & 234/235 & 7/8 & 0.40 \\\hline
    Cong-Opt & 42/43 & 7/7 & 0.20 & 100/101 & 7/7 & 0.29 & 123/124 & 7/7 & 0.35 & 89/90 & 7/7 & 0.41 & 184/185 & 7/7 & 0.63 \\\hline
    Full-Opt & 9/10 & 7/7 & 0.21 & 19/20 & 7/7 & 0.57 & 23/24 & 7/7 & 0.81 & 24/25 & 7/8 & 0.87 & 45/46 & 9/9 & 1.80 \\ \hline
    \textbf{HP code} & \multicolumn{3}{c|}{[[450, 32, 8]] ($89\%$)} & \multicolumn{3}{c|}{[[882, 50, 10]] ($88\%$)} & \multicolumn{3}{c|}{[[882, 98, 8]] ($90\%$)} & \multicolumn{3}{c|}{[[1800, 72, 14]] ($88\%$)} & \multicolumn{3}{c|}{[[1922, 200, 12]] ($90\%$)}\\
    \hline
     & Size & Deg & Time  & Size & Deg & Time  & Size & Deg & Time  & Size & Deg & Time  & Size & Deg & Time   \\ \hline
    CKBB   & 152/148 & 7/7 & 0.24 & 240/235 & 7/7 & 0.24 & 225/217 & 9/9 & 0.24 & 600/589 & 9/9 & 0.24 & 432/419 & 11/11 & 0.25 \\\hline
    Gauge & 122/123 & 8/10 & 0.06 & 155/156 & 8/9 & 0.07 & 174/175 & 10/12 & 0.07 & 301/302 & 10/12 & 0.11 & 302/303 & 11/13 & 0.10 \\\hline
    Exp-Opt & 56/57 & 7/7 & 0.07 & 98/99 & 7/7 & 0.08 & 104/105 & 9/10 & 0.08 & 180/181 & 9/10 & 0.11 & 206/207 & 11/12 & 0.11 \\\hline
    Cong-Opt & 56/57 & 7/7 & 0.24 & 71/72 & 7/7 & 0.29 & 78/79 & 9/10 & 0.34 & 136/137 & 9/10 & 0.62 & 164/165 & 11/12 & 0.60 \\\hline
    Full-Opt & 14/15 & 7/7 & 0.41 & 19/20 & 7/7 & 0.63 & 18/19 & 9/10 & 0.33 & 35/36 & 9/10 & 1.15 & 30/31 & 11/12 & 0.59 \\ \hline
    \textbf{QT code} & \multicolumn{3}{c|}{[[81, 13, 4]] ($86\%$)} & \multicolumn{3}{c|}{[[90, 15, 4]] ($86\%$)} & \multicolumn{3}{c|}{[[117, 17, 4]] ($85\%$)} & \multicolumn{3}{c|}{[[126, 19, 4]] ($80\%$)} & \multicolumn{3}{c|}{[[135, 23, 4]] ($83\%$)}\\
    \hline
     & Size & Deg & Time  & Size & Deg & Time  & Size & Deg & Time  & Size & Deg & Time  & Size & Deg & Time   \\ \hline
    CKBB  & 28/28 & 7/7 & 0.24 & 28/28 & 7/7 & 0.24 & 78/76 & 7/7 & 0.23 & 220/217 & 7/7 & 0.24 & 312/309 & 7/7 & 0.24 \\\hline
    Gauge & 38/39 & 7/7 & 0.05 & 38/39 & 7/7 & 0.05 & 66/67 & 8/10 & 0.05 & 109/110 & 8/10 & 0.06 & 171/172 & 8/9 & 0.07 \\\hline
    Exp-Opt & 21/22 & 6/7 & 0.05 & 21/22 & 6/7 & 0.05 & 40/41 & 7/7 & 0.06 & 71/72 & 7/7 & 0.12 & 148/149 & 7/7 & 0.26 \\\hline
    Cong-Opt  & 5/6 & 6/7 & 0.14 & 5/6 & 6/7 & 0.14 & 40/41 & 7/7 & 0.18 & 73/74 & 7/7 & 0.29 & 123/124 & 7/7 & 0.40 \\\hline
    Full-Opt  & 4/5 & 6/7 & 0.14 & 4/5 & 6/7 & 0.14 & 10/11 & 7/7 & 0.30 & 22/23 & 8/7 & 0.71 & 29/30 & 8/7 & 0.97 \\ \hline
    \textbf{Surface code} & \multicolumn{3}{c|}{[[25, 1, 5]] ($87\%$)} & \multicolumn{3}{c|}{[[49, 1, 7]] ($87\%$)} & \multicolumn{3}{c|}{[[81, 1, 9]] ($86\%$)} & \multicolumn{3}{c|}{[[121, 1, 11]] ($84\%$)} & \multicolumn{3}{c|}{[[169, 1, 13]] ($87\%$)}\\
    \hline
     & Size & Deg & Time  & Size & Deg & Time  & Size & Deg & Time  & Size & Deg & Time  & Size & Deg & Time   \\ \hline
    CKBB  & 84/85 & 5/5 & 0.24 & 84/85 & 5/5 & 0.24 & 220/221 & 5/5 & 0.24 & 312/313 & 5/5 & 0.24 & 684/685 & 5/5 & 0.24  \\\hline
    Gauge & 102/103 & 8/9 & 0.06 & 102/103 & 8/9 & 0.06 & 165/166 & 8/9 & 0.07 & 200/201 & 8/10 & 0.07 & 381/382 & 8/10 & 0.09 \\\hline
    Exp-Opt & 39/40 & 5/6 & 0.10 & 39/40 & 5/6 & 0.10 & 76/77 & 5/7 & 0.36 & 90/91 & 5/7 & 0.48 & 248/249 & 5/7 & 0.94 \\\hline
    Cong-Opt & 47/48 & 5/7 & 0.28 & 47/48 & 5/7 & 0.28 & 76/77 & 5/7 & 0.47 & 124/125 & 5/7 & 0.56 & 243/244 & 5/7 & 0.89 \\\hline
    Full-Opt & 11/12 & 6/7 & 0.41 & 11/12 & 6/7 & 0.42 & 23/24 & 7/7 & 0.95 & 32/33 & 8/8 & 1.22 & 50/51 & 10/9 & 2.18 \\ \hline
    \textbf{Color code} & \multicolumn{3}{c|}{[[19, 1, 5]] ($83\%$)} & \multicolumn{3}{c|}{[[37, 1, 7]] ($87\%$)} & \multicolumn{3}{c|}{[[61, 1, 9]] ($87\%$)} & \multicolumn{3}{c|}{[[91, 1, 11]] ($86\%$)} & \multicolumn{3}{c|}{[[127, 1, 13]] ($88\%$)}\\
    \hline
     & Size & Deg & Time  & Size & Deg & Time  & Size & Deg & Time  & Size & Deg & Time  & Size & Deg & Time   \\ \hline
    CKBB  & 40/41 & 7/7 & 0.24 & 84/85 & 7/7 & 0.24 & 144/145 & 7/7 & 0.24 & 220/221 & 7/7 & 0.24 & 312/313 & 7/7 & 0.24 \\\hline
    Gauge & 53/54 & 8/8 & 0.05 & 102/103 & 8/9 & 0.06 & 137/138 & 8/10 & 0.06 & 164/165 & 8/9 & 0.07 & 254/255 & 8/10 & 0.07 \\\hline
    Exp-Opt & 27/28 & 6/7 & 0.09 & 39/40 & 6/7 & 0.10 & 59/60 & 6/7 & 0.26 & 101/102 & 6/7 & 0.35 & 91/92 & 6/7 & 0.47 \\\hline
    Cong-Opt & 20/21 & 6/7 & 0.20 & 47/48 & 6/7 & 0.28 & 64/65 & 6/7 & 0.38 & 76/77 & 6/7 & 0.47 & 124/125 & 6/7 & 0.56 \\\hline
    Full-Opt & 7/8 & 6/7 & 0.26 & 11/12 & 6/7 & 0.41 & 18/19 & 7/7 & 0.67 & 23/24 & 7/7 & 0.91 & 30/31 & 8/7 & 1.22 \\ \hline

    \end{tabular}%
    }
  \label{tab:overall}%
\end{table*}

In this section, we evaluate \myCompilerNameSpace by comparing it against state-of-the-art baselines, analyzing the impact of each optimization step, examining code performance under noise, and assessing scalability.

\subsection{Experiments Setup.} \label{Sec:ExperimentsSetup}
\textbf{Baselines.} Our baseline code surgery schemes include state-of-the-art the CKBB~\cite{cohen2022low} scheme, and a protocol based on gauging logical operators~\cite{williamson2024low, he2025extractors} (denoted as Gauge).

\textbf{Experimental Configurations.} To illustrate the effect of each optimization step, we consider three configurations:
(1) \textsf{Exp-Opt}, which enables the conditioned expander construction process introduced in Section~\ref{Sec:OptimizingExpanderConstructionviaConditionedRandomization};
(2) \textsf{Cong-Opt}, which incorporates the technique for balancing expansion and congestion during graph construction (Section~\ref{Sec:BalancingExpansionandCongestionthroughDynamicCycleTracking});
and (3) \textsf{Full-Opt}, which further incorporates LDPC qubit and check degree constraints into the construction (Section~\ref{Sec:IncorporatingDegreeConstraintsintoPracticalCodeDesign}).

\textbf{Benchmarks.} We evaluate \myCompilerNameSpace on a variety of QEC code families of different sizes, with detailed information summarized in Table~\ref{tab:benchmark}. Here, BB denotes Bivariate Bicycle codes, HP denotes Hypergraph Product codes, and QT denotes Quantum Tanner codes. For each code, we report $n$ (number of physical qubits), $k$ (number of logical qubits), and $d$ (code distance). 
HP codes are obtained from~\cite{aydin2025cyclic}, and QT codes are generated randomly following~\cite{perlin2023qldpc}. For QT codes, we compute the code distance via integer programming~\cite{yoder2025tour}.

\textbf{Metrics.} We evaluate performance using the following metrics:
(\textbf{a}) \textbf{Qubit/Check count} (denoted as \textbf{Size}): The number of ancillary qubits and checks used in constructing the code surgery system.
(\textbf{b}) \textbf{Qubit/Check degree} (denoted as \textbf{Deg}): The maximum degree of qubit and check in the deformed code system.
(\textbf{c}) \textbf{Logical error rate:} The simulated logical error rate of the deformed code.
(\textbf{d}) \textbf{Time}:  The CPU time (in second) required to generate the code surgery system.

\textbf{Implementation.} We implement \myCompilerNameSpace in Python. All experiments are conducted on a server with a 2.4\,GHz CPU and 0.75\,TB of memory. In the graph construction, we perform 100 sampling trials and select the smallest graph (as described at the end of Section~\ref{Sec:ConditionedExpanderConstruction}). For the Gauge baseline, we set the expansion threshold to $\beta = 0.34$, and pick the degree of regular expander as 3. 
Since the maximum qubit and check degrees in our benchmark codes are $10$, we set the degree upper bound of the deformed code to $12$. 
We incorporate the load-aware rule~\cite{wang2025cycle} into the cycle basis construction process for both the Gauge baseline and \myCompilerName.

\begin{table*}[t]
  \centering
  \caption{Resources to implementation communication between codes with close distances}
      {\fontsize{8}{10}\selectfont
    \begin{tabular}{|m{1.6cm}<{\centering}|c|c|c|c|c|c|c|c|c|c|c|c|c|c|c|}
    \hline
    \textbf{Code config} & \multicolumn{3}{c|}{\begin{tabular}{c} BB [[72, 12, 6]] +\\ Surface [[25, 1, 5]] \end{tabular}} & \multicolumn{3}{c|}{\begin{tabular}{c} BB [[90, 8, 10]] +\\ Surface [[81, 1, 9]] \end{tabular}} & \multicolumn{3}{c|}{\begin{tabular}{c} BB [[144, 12, 12]] +\\ Surface [[121, 1, 11]] \end{tabular}} & \multicolumn{3}{c|}{\begin{tabular}{c} HP [[1922, 200, 12]] +\\ Surface [[121, 1, 11]] \end{tabular}} & \multicolumn{3}{c|}{\begin{tabular}{c} QT [[135, 23, 4]] +\\ Surface [[25, 1, 5]] \end{tabular}}\\
    \hline
     & Size & Deg & Time  & Size & Deg & Time  & Size & Deg & Time  & Size & Deg & Time  & Size & Deg & Time   \\ \hline
    Gauge & 197/199 & 8/9 & 0.11 & 331/333 & 8/11 & 0.14 & 450/452 & 8/11 & 0.15 & 513/515 & 11/14 & 0.18 & 277/279 & 8/10 & 0.13 \\ \hline
    Full-Opt & 25/27 & 7/7 & 0.62 & 51/53 & 9/9 & 1.51 & 67/69 & 10/9 & 2.07 & 73/75 & 11/12 & 1.82 & 44/46 & 9/8 & 1.38 \\ \hline
    \textbf{Code config} & \multicolumn{3}{c|}{\begin{tabular}{c} BB [[72, 12, 6]] +\\ Color [[19, 1, 5]] \end{tabular}} & \multicolumn{3}{c|}{\begin{tabular}{c} BB [[90, 8, 10]] +\\ Color [[61, 1, 9]] \end{tabular}} & \multicolumn{3}{c|}{\begin{tabular}{c} BB [[144, 12, 12]] +\\ Color [[91, 1, 11]] \end{tabular}} & \multicolumn{3}{c|}{\begin{tabular}{c} HP [[1922, 200, 12]] +\\ Color [[91, 1, 11]] \end{tabular}} & \multicolumn{3}{c|}{\begin{tabular}{c} QT [[135, 23, 4]] +\\ Color [[19, 1, 5]] \end{tabular}}\\
    \hline
     & Size & Deg & Time  & Size & Deg & Time  & Size & Deg & Time  & Size & Deg & Time  & Size & Deg & Time   \\ \hline
    Gauge & 148/150 & 8/9 & 0.11 & 303/305 & 8/11 & 0.13 & 414/416 & 8/11 & 0.14 & 477/479 & 11/14 & 0.18 & 228/230 & 8/10 & 0.12 \\\hline
    Full-Opt & 21/23 & 8/7 & 0.47 & 45/47 & 9/8 & 1.22 & 58/60 & 9/8 & 1.78 & 64/66 & 11/12 & 1.51 & 40/42 & 9/8 & 1.22 \\ \hline
    \textbf{Code config} & \multicolumn{3}{c|}{\begin{tabular}{c} BB [[144, 12, 12]] +\\ BB [[144, 12, 12]] \end{tabular}} & \multicolumn{3}{c|}{\begin{tabular}{c} HP [[1922, 200, 12]] +\\ HP [[1922, 200, 12]] \end{tabular}} & \multicolumn{3}{c|}{\begin{tabular}{c} QT [[135, 23, 4]] +\\ QT [[135, 23, 4]] \end{tabular}} & \multicolumn{3}{c|}{\begin{tabular}{c} BB [[90, 8, 10]] +\\ HP [[882, 50, 10]] \end{tabular}} & \multicolumn{3}{c|}{\begin{tabular}{c} BB [[144, 12, 12]] +\\ HP [[1922, 200, 12]] \end{tabular}}\\
    \hline
     & Size & Deg & Time  & Size & Deg & Time  & Size & Deg & Time  & Size & Deg & Time  & Size & Deg & Time   \\ \hline
    Gauge & 490/492 & 8/11 & 0.15 & 616/618 & 11/14 & 0.21 & 346/348 & 8/10 & 0.14 & 322/324 & 8/11 & 0.14 & 553/555 & 11/14 & 0.18  \\ \hline
    Full-Opt & 60/62 & 9/8 & 1.74 & 72/74 & 11/12 & 1.18 & 58/60 & 9/10 & 1.94 & 48/50 & 9/8 & 1.17 & 66/68 & 11/12 & 1.48 \\ \hline
    \textbf{Code config} & \multicolumn{3}{c|}{\begin{tabular}{c} BB [[72, 12, 6]]$\times$2 \\ 24 joint X logicals \end{tabular}} & \multicolumn{3}{c|}{\begin{tabular}{c} BB [[90, 8, 10]]$\times$2 \\ 16 joint X logicals \end{tabular}} & \multicolumn{3}{c|}{\begin{tabular}{c} BB [[108, 8, 10]]$\times$2 \\ 16 joint X logicals \end{tabular}} & \multicolumn{3}{c|}{\begin{tabular}{c} BB [[144, 12, 12]]$\times$2 \\ 24 joint X logicals \end{tabular}} & \multicolumn{3}{c|}{\begin{tabular}{c} BB [[288, 12, 18]]$\times$2 \\ 24 joint X logicals \end{tabular}}\\
    \hline
     & Size & Deg & Time  & Size & Deg & Time  & Size & Deg & Time  & Size & Deg & Time  & Size & Deg & Time   \\ \hline
    Gauge & 1.6k & 9/13 & 0.46 & 1.6k & 9/12 & 0.46 & 2.0k & 9/13 & 0.53 & 3.1k & 9/13 & 0.71 & 7.0k & 9/14 & 1.60  \\ \hline
    Full-Opt & 278/280 & 11/9 & 8.1 & 286/288 & 11/11 & 8.3 & 390/392 & 11/11 & 12.9 & 508/510 & 11/11 & 18.3 & 1.1k & 13/12 & 53.5 \\ \hline
    \end{tabular}%
    }
  \label{tab:qldpc_topologic}%
\end{table*}

\begin{figure*}[t]
    \centering
    \includegraphics[width=1.0\linewidth]{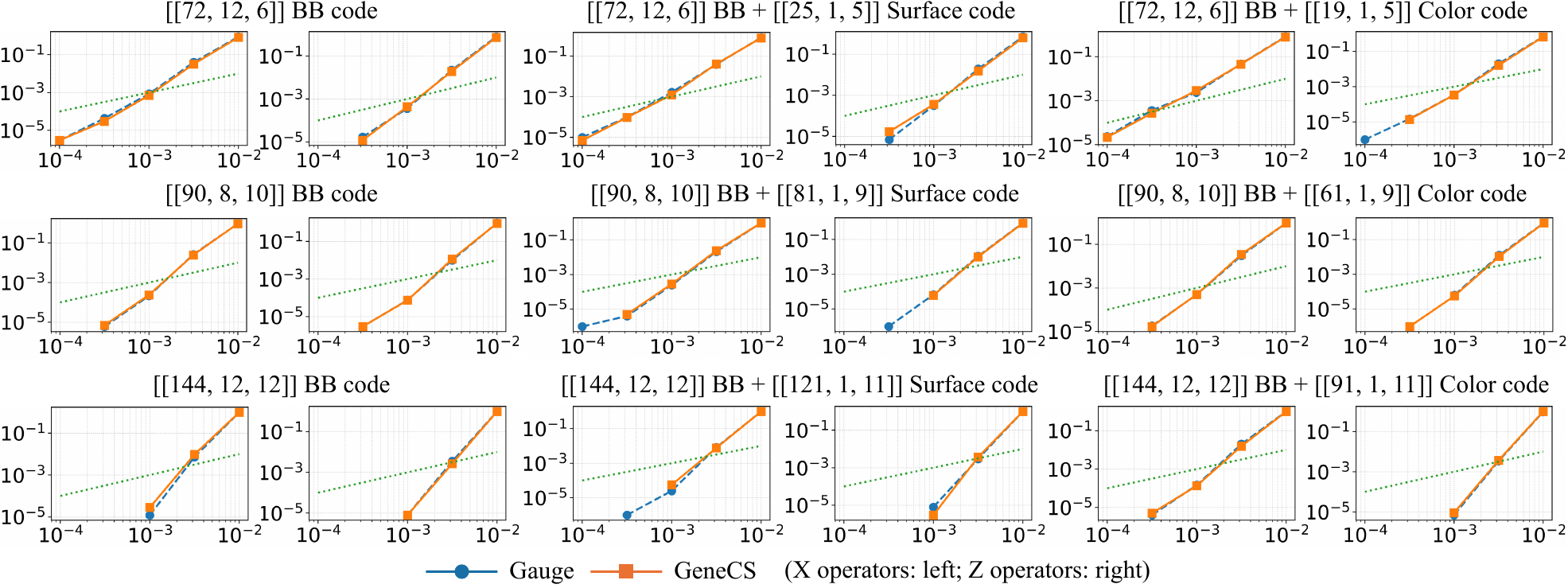}
    \caption{Logical error rate simulation under circuit noise model. Green dot lines: physical error rate equal to logical error rate.}
    \label{fig:simulation}
\end{figure*}

\begin{table*}[t]
  \centering
  \caption{Scalability study on surface code}
      {\fontsize{8}{10}\selectfont
    \begin{tabular}{|c|c|c|c|c|c|c|c|c|c|c|c|c|c|c|}
    \hline
    \multicolumn{3}{|c|}{[[441, 1, 21]]} & \multicolumn{3}{c|}{[[1681, 1, 41]]} & \multicolumn{3}{c|}{[[3721, 1, 61]]} & \multicolumn{3}{c|}{[[6561, 1, 81]]} & \multicolumn{3}{c|}{[[10201, 1, 101]]}\\
    \hline
     Size & Deg & Time  & Size & Deg & Time  & Size & Deg & Time  & Size & Deg & Time  & Size & Deg & Time   \\ \hline
    149/150 & 12/12 & 7.26 & 360/361 & 12/12 & 23.04 & 777/778 & 12/12 & 42.36 & 1086/1087 & 12/12 & 69.51 & 1708/1709 & 12/12 & 103.81  \\ \hline
    \end{tabular}%
    }
  \label{tab:scale}%
\end{table*}

\subsection{Overall Result} \label{Sec:OverallResult}
We first evaluate the ancilla overhead required for measuring a single logical operator via Code Surgery. We report results for logical $X$ operators, while the results for $Z$ operators are similar. This operation constitutes a fundamental building block of the cross-code logical operations described in Section~\ref{Sec:Cross-CodeCommunication}. The results are shown in Table~\ref{tab:overall}, where for \textbf{Size}, the left (right) entry denotes the number of qubits (checks) for the ancilla system, and for \textbf{Deg}, the left (right) entry denotes the maximum qubit (check) degree.

Compared with the smallest code surgery system size achieved by the baseline compilers (CKBB and Gauge), the best results produced by \myCompilerNameSpace achieve an average reduction of $86.7\%$ in the number of ancillary qubits and $85.8\%$ in the number of ancillary checks. 
We also report the compilation time, in Table~\ref{tab:overall}. All compilations complete within a few seconds. Detailed reduction results for each benchmark are provided in the table alongside their corresponding code instances.

We further analyze the contribution of each optimization technique. The expander optimization (\textsf{Exp-Opt}) achieves a reduction of $41.0\%$ in qubit count and $40.3\%$ in check count. Building on this, the congestion--expansion balancing technique (\textsf{Cong-Opt}) provides an additional reduction of $8.1\%$ for both qubits and checks. Finally, by incorporating degree constraints, we obtain a further reduction of $37.6\%$ in qubit count and $37.4\%$ in check count.

\textbf{Comparison with best-known results for specific code instances.}
We compare our constructions with ad-hoc solutions obtained via integer programming~\cite{williamson2024low}. 
For BB codes $[[144,12,12]]$ and $[[288,12,18]]$, the best-known ancillary sizes are $22/19$ and $34/31$ (qubits/checks), with qubit and check degrees both equal to $7$. In comparison, \myCompilerNameSpace produces $24/25$ and $45/46$, with degrees $7/8$ and $9/9$, respectively. Despite using a general and scalable synthesis framework, our results remain close to these optimized solutions, demonstrating competitive practical resource efficiency. 

\subsection{Cross-Code Communication} \label{Sec:Cross-CodeCommunication}
We evaluate the ancillary overhead of measuring joint logical operators between codes. We report results for $X \otimes X$ operators, while other Pauli combinations show similar behavior. We select codes with similar distances from Table~\ref{tab:overall} and connect them using the bridge/adapter constructions~\cite{swaroop2024universal}. 

First, we evaluate the code surgery system overhead of measuring joint logical operators between good QLDPC codes and topological codes with comparable distances (row 1\&2 in Table~\ref{tab:qldpc_topologic}). GeneCS can accommodate codes with very different distances, but such a configuration is unlikely to be useful in practice because the overall fault tolerance is determined by the lower distance of the two codes being connected.
We consider BB, HP, and QT codes combined with surface and color codes. Compared with the baseline compiler (Gauge), our optimized approach (\textsf{Full-Opt}) achieves a reduction of approximately $85\%$ in both ancillary qubit and check counts across all configurations, demonstrating its effectiveness in minimizing the resource cost of cross-family logical operations.

We further evaluate the communication overhead between good QLDPC codes (row 3 in Table~\ref{tab:qldpc_topologic}), considering both homogeneous (e.g., BB+BB, HP+HP, QT+QT) and heterogeneous (e.g., BB+HP) configurations. Cross-family settings involving QT codes are omitted due to their relatively small distance in our benchmark. Compared with the baseline (Gauge), our optimized construction (\textsf{Full-Opt}) achieves a reduction of approximately $86\%$ in both ancillary qubit and check counts across all configurations, demonstrating its effectiveness in enabling resource-efficient logical communication within and across good QLDPC codes.

In addition, for each BB code, we consider two identical code patches and perform a joint measurement of all logical $X$ operators across both patches (row 4 in Table~\ref{tab:qldpc_topologic}). The total number of logical operators measured ranges from $16$ to $24$. Despite this large number of simultaneously extracted observables, the ancilla size compiled by \myCompilerNameSpace scales very favorably, requiring on average only $1.8\times$ the original code size. This represents an average $83\%$ reduction compared to the baseline approach. Our results therefore demonstrate that joint logical operator extraction can be performed with remarkably low overhead, even in the high-rate, multi-logical-qubit setting.

\begin{remark}
The qubit/check degree constraint is enforced only during the construction of the ancilla graph. In the cross-code communication stage, additional degree may be introduced by the adapter. However, this increase incurs only a constant overhead, as established in theory~\cite{swaroop2024universal}.
Importantly, such cases can be mitigated via standard post-processing techniques. For example, edges with high congestion can be duplicated to redistribute load and halve the effective edge degree. Likewise, high-degree check nodes can be decomposed using Bell pairs, thereby halving their effective degree. We leave such post-processing techniques for future work.
\end{remark}

\subsection{Logical Error Rate Simulation} \label{Sec:LogicalErrorRateSimulation}
We simulate the logical error rates of the deformed codes under a circuit-level depolarizing noise model, in which each single/two-qubit gate is followed by an independent depolarizing channel. As shown in Figure~\ref{fig:simulation}, each data point is estimated from $10^6$ samples; missing points indicate that no logical error was observed within $10^6$ samples. Across all tested configurations, including standalone BB codes and their combinations with surface and color codes, the optimized constructions by \myCompilerNameSpace (using \textsf{Full-Opt}) consistently achieve logical error rates that closely match those of the baseline (Gauge).

Notably, this performance is maintained despite a substantial reduction in ancillary resources (Tables~\ref{tab:overall} and~\ref{tab:qldpc_topologic}). The logical error curves of the optimized codes nearly overlap with those of the baseline across a wide range of physical error rates, indicating that error-correcting performance is preserved while significantly reducing qubit and check overhead.
Overall, these results demonstrate that \myCompilerNameSpace improves resource efficiency without compromising reliability.

\subsection{Scalability} \label{Sec:Scalability}
We evaluate the scalability of our compiler on the surface code family with distances up to $d=101$, as shown in Table~\ref{tab:scale}. Our method scales to codes with more than $10^4$ physical qubits while maintaining a constant maximum qubit and check degree of $12$.
The total runtime grows moderately with the code size, reaching approximately $10^2$ seconds for the largest instance. Notably, this runtime corresponds to $100$ independent samples during the construction process. Amortized over these samples, the average compilation time per instance is around one second, demonstrating the practical efficiency of our approach.

To further understand scalability, we analyze the theoretical complexity of our construction. The dominant cost arises from repeatedly estimating the second eigenvalue $\lambda_2(G)$ during graph construction. In the worst case, each estimation requires $O(n^3)$ time, and since we perform $O(n)$ such updates (due to constant-degree expansion), the overall worst-case complexity is $O(n^4)$.
Despite this, our approach is significantly more efficient than prior methods that require exponential-time distance verification.

\begin{remark}
Although our scalability evaluation focuses on surface codes, the results is expected to extend more broadly to other stabilizer codes. In particular, prior work has shown that irreducible logical operators in CSS stabilizer codes admit a locally repetitive structure analogous to that of the surface code~\cite{cowtan2024css}. Consequently, we expect the runtime and scaling behavior observed here to be broadly representative beyond the surface code setting.
\end{remark}

%% file: tex/8-RelatedWork.tex
\section{Related Work} \label{Sec:RelatedWork}

Compared with the well-understood and highly optimized lattice surgery constructions for surface codes, code surgery for general stabilizer codes remains relatively new, and most existing works are primarily theoretical.

The first line of work focuses on general constructions with theoretical guarantees. Early work such as CKBB~\cite{cohen2022low} preserves code distance but introduces low-distance gauge degrees of freedom, which require excessive thickening and lead to substantial overhead. Later gauge-fixing-based approaches~\cite{cross2024improved} improve the construction pipeline, while~\cite{williamson2024low} further provides theoretical guarantees on QLDPC structure, forming the basis for extensions such as extractors~\cite{he2025extractors} and adapters~\cite{swaroop2024universal}. 
In parallel, branching-based constructions have been proposed, initially based on CKBB~\cite{zhang2025time} and later incorporating gauge-fixing techniques~\cite{cowtan2025parallel}. However, these approaches typically rely on conservative pipelines, resulting in large practical overhead.

The second line of work~\cite{cowtan2024ssip, cowtan2024css, zheng2025high, ide2025fault} either does not provide guarantees on distance preservation or relies on computationally expensive procedures that do not scale to large instances.
Ensuring distance preservation in these approaches requires verification by solving computationally expensive problems (e.g., integer programming, or sparsest-cut problems), which do not scale to large codes. Moreover, such post-hoc verification does not ensure composability, as combining individually verified constructions during code communication may still violate distance preservation. Notably, recent work on parallel product surgery~\cite{gu2026qgpu} approaches also require explicit distance checking to verify distance preservation, which can become computationally demanding for larger codes and higher target distances.

In contrast, our approach achieves both theoretical correctness and practical efficiency: it preserves code distance and QLDPC structure by construction, while significantly reducing overhead and scaling efficiently to large instances without requiring post-compilation verification, and ensuring that the resulting surgeries compose naturally.

%% file: tex/Appendix.tex
\section{Background on Graph-Based Code Surgery Construction}
\label{app:graph_construction}

This section provides the formal background on graph-based code surgery constructions. These materials are included for completeness and are not required to understand the main contributions of the paper.

\subsection{Graph Construction} \label{Sec:graphconstructionapp}

\begin{algorithm}[t]
\caption{Ancilla Graph Construction}
\label{alg:AncillaGraphConstruction}
    \textbf{Step 1: Port Connecting}. Let $V_1$ be a set of vertices with $|V_1| = |L|$, and construct a bijection $f$ between $L$ and $V_1$.
    
    \textbf{Step 2: Expanding}. Construct a base graph $G_1 = (V_1, E_1)$ as follows: (a). For each stabilizer $S \in \mathcal{S}$, recall that $K(S, L)$ is the set of qubits $q \in \mathcal{Q}$ such that $S(q)$ and $L(q)$ anti-commute. Add a perfect matching $\mu(S, L)$ on $f(K(S, L))$, consisting of $|K(S, L)|/2$ edges, to $G_1$. (b). Construct a constant-degree graph $D$ on $|L|$ vertices with Cheeger constant $\beta_D \ge \beta$ for some constant $\beta$. Add the edges of $D$ to $G_1$.

    \textbf{Step 3: Decongestion}. Apply the Decongestion Lemma (see Lemma A.0.2 in~\cite{freedman2021building}) to obtain a cycle basis $\mathcal{R}$ of $G_1$ with congestion $\rho$, together with a partition (see Corollary 15 in~\cite{he2025extractors}) $\mathcal{R} = \bigcup_{i=1}^{t} P_i$ such that each $P_i$ consists of edge-disjoint cycles and $t \leq O((\log |L|)^3)$.
    
    \textbf{Step 4: Thickening}. Thicken $G_1$ by a factor $\ell = \max(t, 1/\beta)$ to obtain $G = G_1 \square J_\ell$, where $G = (V, E)$. This operation amplifies the relative Cheeger constant from $\beta$ to $\ell \beta$, and spread the cycle basis to different level $G_1 \times \{r\}$.
    
    \textbf{Step 5: Cellulation}. For each level $G_1 \times \{r\}$, triangulate every cycle in $P_r$ to obtain a collection of triangles that generate the cycles in $P_r \times \{r\}$.
\end{algorithm}

  We require the ancilla graph preserves the code distance, and that the deformed code remains a QLDPC code if the original code belongs to the QLDPC family. Prior work~\cite{williamson2024low, he2025extractors} provides a theoretical procedure for constructing such graphs that satisfy these conditions, which we briefly introduce here. We first present several definitions.

\begin{definition}[Cheeger Constant and Relative Cheeger Constant~\cite{he2025extractors}] \label{Def:Cheeger}
Let $G = (V, E)$ be an undirected graph. For any subset $U \subseteq V$, the edge boundary of $U$, denoted by $\delta(U)$, is defined as the set of edges with exactly one endpoint in $U$.

The Cheeger constant $\beta(G)$ is defined as the largest real number such that, for all $U \subseteq V$,
\begin{equation*}
|\delta(U)| \ge \beta(G)\cdot \min(|U|, |V \setminus U|).
\end{equation*}

Furthermore, for a subset $P \subseteq V$ and an integer $t > 0$, the relative Cheeger constant $\beta_t(G, P)$ is defined as the largest real number such that, for all $U \subseteq V$,
\begin{equation*}
|\delta(U)| \ge \beta_t(G, P)\cdot \min\big(t, |U \cap P|, |P \setminus U|\big).
\end{equation*}

In particular, the standard Cheeger constant is recovered as $\beta(G) = \beta_{|V|}(G, V)$. Both quantities are commonly referred to as measures of graph expansion, with larger values indicating better expansion.
\end{definition}

\begin{definition}[Congestion~\cite{he2025extractors}]
Let $G = (V, E)$ be a graph, and let $R$ be a cycle basis of $G$. For each edge $e \in E$, let $\rho_R(e)$ denote the number of cycles in $R$ that contain $e$. Define
\[
\rho = \max_{e \in E} \rho_R(e).
\]
We say that $R$ has congestion $\rho$, or that $R$ is a $\rho$-basis.
\end{definition}

\begin{definition}[Graph Thickening~\cite{he2025extractors}]
Let $G_1 = (V_1, E_1)$ and $G_2 = (V_2, E_2)$ be two graphs. The Cartesian product of $G_1$ and $G_2$ is the graph 
\[
G = G_1 \square G_2 = (V_1 \times V_2, E),
\]
where
\[
E = \{ ((u_1, u_2), (v_1, v_2)) : (u_1, v_1) \in E_1 \text{ or } (u_2, v_2) \in E_2 \},
\]
where in the first case we require $u_2 = v_2$, and in the second case $u_1 = v_1$.
Let $J_\ell$ denote a path graph of length $\ell$, consisting of $\ell$ vertices and $\ell - 1$ edges. We define the $\ell$-fold thickening of $G$ as the graph $G \square J_\ell$.

We refer to the $\ell$ copies of $G$ in $G \square J_\ell$ as \emph{levels}, and denote the $r$-th level by $G \times \{r\} = (V \times \{r\}, E \times \{r\})$ for $1 \le r \le \ell$.
\end{definition}

The graph construction procedure is provided in Algorithm~\ref{alg:AncillaGraphConstruction}.
For a logical operator $L$, we first construct a graph with only vertices and a port function $f$ as shown in Step 1.

    
        
    
    
    

The purpose of Steps 2(b) and 4 is to ensure that logical operators in the original code $\mathcal{Q}$, when dressed by the newly introduced stabilizers, do not suffer a reduction in weight, thereby preserving the code distance. Steps 3 and 4 ensure that each qubit participates in only a constant number of cycle checks, while Step 5 guarantees that each cycle check has constant weight.

An example is shown in Figure~\ref{fig:GraphConstruction}. Suppose the base graph has Cheeger constant $0.5$ and admits three cycles in its cycle basis. By thickening the graph with $\ell = 2$, we obtain a graph with relative Cheeger constant $2 \cdot 0.5 = 1$, which meets the threshold required to preserve the code distance. Moreover, the cycle basis is distributed across levels such that cycles within each level are edge-disjoint, ensuring that each qubit participates in only a constant number of cycle checks (as illustrated in blue and yellow cycles in Figure~\ref{fig:GraphConstruction}(b)). Finally, the cellulation step reduces the cycle length (i.e., the weight of cycle checks) to 3 by decomposing cycles into triangles, as shown in Figure~\ref{fig:GraphConstruction}(c).

\begin{figure}[t]
    \centering
    \includegraphics[width=0.9\linewidth]{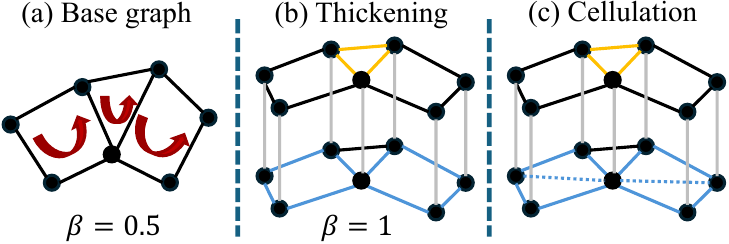}
    \caption{Graph Construction}
    \label{fig:GraphConstruction}
\end{figure}

The graph constructed in this manner, together with the measurement protocol introduced in Definition 10 of~\cite{he2025extractors}, enables the measurement of logical operators of a code.

\textbf{Bridge/Adapter.} To enable logical operator measurements across codes, we introduce the notion of bridge/adapter. A bridge typically refers to a system that connects the measurement graphs of two codes within the same family, whereas an adapter connects measurement graphs between different code families.

\begin{definition}[Bridge/Adapter~\cite{swaroop2024universal}]
Let $G_1 = (V_1, E_1)$ and $G_2 = (V_2, E_2)$ be two graphs. A bridge/adapter between $G_1$ and $G_2$ is defined as a set of pairwise non-overlapping edges $B$ that connect vertices in $V_1$ to vertices in $V_2$.
\end{definition}

Suppose we aim to measure the joint logical operator $L_1 L_2$, where $L_1$ and $L_2$ correspond to codes $\mathcal{Q}_1$ and $\mathcal{Q}_2$, respectively. The measurement procedure proceeds as follows. We first construct the measurement graphs $G_1$ and $G_2$ independently using Algorithm~\ref{alg:AncillaGraphConstruction}. We then set $d$ to be the minimum code distance of the two codes, and connect exactly $d$ vertices between the corresponding port sets $P_1$ and $P_2$ via an bridge/adapter.
The additional cycle checks introduced by the bridge/adapter are included as stabilizers in the deformed code. As a result, the joint logical operator $L_1 L_2$ is promoted to a stabilizer of the deformed code.